\title{Cross-Chain Payment Protocols with Success Guarantees}
\author{Rob van Glabbeek\\{\small Data61, CSIRO, Sydney, Australia}\\{\small UNSW, Sydney, Australia}\and Vincent Gramoli\\{\small University of Sydney, Australia}\\{\small Data61, CSIRO, Sydney, Australia} \and Pierre Tholoniat\\{\small University of Sydney, Australia}\\{\small \'{E}cole Polytechnique, France}}
\date{}
\tikzset{inner sep=0pt,minimum size=1pt}
\DeclareMathAlphabet{\mathbbm}{U}{bbm}{m}{n}
\newcommand{\IR}{\mathbbm{R}}     
\newcommand{\now}{\mathit{now}}
\renewcommand{\epsilon}{\varepsilon}
\newtheorem{theorem}{Theorem}
\newtheorem{lemma}{Lemma}
\theoremstyle{definition}
\newtheorem{definition}{Definition}
\newenvironment{smallenum}{
\begin{itemize}
    \setlength{\topsep}{-3pt} 
    \setlength{\partopsep}{-3pt}
  \setlength{\itemsep}{0pt}
  \setlength{\parskip}{-1pt}
  \setlength{\parsep}{-6pt}
}{\end{itemize}}
\begin{document}

\maketitle

    \begin{abstract}
        \noindent
        In this paper, we consider the problem of cross-chain payment whereby customers of different
        escrows---implemented by a bank or a blockchain smart contract---successfully transfer
        digital assets without trusting each other. Prior to this work, cross-chain payment problems
        did not require this success or any form of progress.  We introduce a new specification
        formalism called \emph{Asynchronous Networks of Timed Automata (ANTA)} to formalise such protocols.
        We present the first cross-chain payment protocol that ensures termination in a
        bounded amount of time and works correctly in the presence of clock skew.
        We then demonstrate that it is impossible to solve this problem without assuming
        synchrony, in the sense that each message is guaranteed to arrive within a known amount of time.
        We also offer a protocol that solves an eventually terminating variant of this cross-chain
        payment problem without synchrony, and even in the presence of Byzantine failures.
        \end{abstract}

        \section{Introduction}
        
        With the advent of various payment protocols comes the problem of interoperability between them. 
        A simple way for users of different protocols to interact is to do a \emph{cross-chain payment} whereby  
        intermediaries can help customer Alice transfer digital assets to Bob even
        though Alice and Bob own accounts in different banks or blockchains.  
        
        A payment between two customers of the same bank is simple.
        Alice just informs the bank that she wants to transfer a certain amount from her
        account to the account of the receiving party Bob; and then the bank carries out this request.
	Alice and Bob do not need to trust each other but need to trust the bank to not withdraw the
        money from Alice's account and never deposit it on Bob's.
        As Bob trusts the bank, he can issue a signed certificate assuring Alice that if the bank
        says that he has been paid, then Alice is of the hook, and any further disputes about
        possible non-payment to Bob are between Bob and the bank. To prevent litigation against her,
        Alice simply needs this statement from Bob as well as a statement that Bob has been paid. As
        Alice does not trust Bob this second statement must come from the bank.
        
        To generalise this protocol to payments between customers of different banks, it helps if the two
        banks have ways to transfer assets to each other, and moreover trust each other. 
        A sound protocol becomes: 
        \begin{smallenum}
        \item[(i)] Bob provides Alice with a signed statement that all he requires for her to have satisfied
          her payment obligation, is a statement from his own bank saying that he has been paid.
        \item[(ii)] Alice's bank promises Alice that if she transfers money to
          Bob, she will get a statement from Bob's bank that the transfer has been carried out.
        \item[1.] Alice orders her bank to initiate the transfer to Bob.
        \item[2.] Alice's bank withdraws the money from her account, and sends it to Bob's bank.
        \item[3a.] Bob's bank places the money in Bob's account
        \item[3b.] and notifies Alice's bank of this.
        \item[4.] Alice's bank forwards to Alice the statement by Bob's bank saying that Bob has been paid.
        \end{smallenum}
        This is roughly how payments between customers of different banks happen in the world of banking.
        Step (ii) is part of general banking agreements, not specific to Alice and Bob.
        Step (i) is typically left implicit in negotiations between Alice and Bob.
        All that Alice needs is the combination of steps (i), (ii) and 4 above.
        Once step (i) and (ii) have been made, she confidently takes step 1, knowing that this will be followed by Step 4.
        Alice's bank is willing to take step (ii) because it trusts Bob's bank, in the sense that step 2
        \emph{will} be followed by Step 3b.
        In fact, when the two banks trust each other, and have ways to transfer assets to each other,
        they can abstractly be seen as one bigger bank, and the problem becomes similar to the problem of payments 
        between customers of the same bank.\vspace{3.5ex}

  \noindent\begin{minipage}{\linewidth}
    \begin{center}
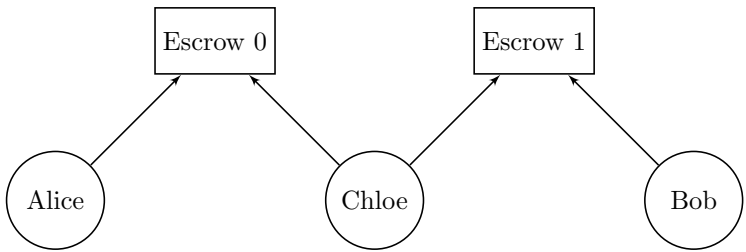

    \begin{tikzpicture}[xscale=0.45,yscale=0.75,>=latex', auto, semithick, node distance=3cm, inner sep=3pt ]
      
      \node[state, circle, minimum size=1.3cm]   (A)                        {Alice};
      \node[state, rectangle]           (B) [above right of = A]        {Escrow 0};
      \node[state, circle, minimum size=1.3cm]           (C) [below right of = B]        {Chloe};
      \node[state, rectangle]           (D) [above right of = C]        {Escrow 1};
      
      \node[state, circle, minimum size=1.3cm]           (E) [below right of = D]        {Bob};
    
      \path (A) edge [->]               node        {} (B)
            (B) edge  [<-]              node        {} (C)
            (C) edge   [->]             node        {} (D)
            (D) edge    [<-]            node        {} (E)
            ;
     
    \end{tikzpicture}
    \vspace{-1ex}
    \end{center}
    \captionof{figure}{Trust relations}
    \label{trust}
    \end{minipage}\vspace{3.5ex}
        
        The problem becomes more interesting when the banks cannot transfer assets to each other and the only trust is the one of
        customers to their own bank. Typical solutions consist of considering banks as \emph{escrows} and having intermediaries, 
        like Chloe, that play the role of connectors between these escrows.
        Figure~\ref{trust} depicts the relations of trust between three customers and two escrows and 
        where the flow of money is from left to right.
        Thomas and Schwartz~\cite{TS15} proposed two interledger protocols: (i)~the universal protocol requires synchrony~\cite{DLS88}
        in that every message between participants is received within a known upper bound and the clock skew
        between participants is bounded;  (ii)~the atomic protocol
        coordinates transfers using an ad-hoc group of notaries, selected by the participants.
        Herlihy, Liskov and Shrira~\cite{Her19} represent a cross-chain payment as a deal matrix $M$
        where $M_{i,j}$ characterises a transfer of some asset from participant $i$ to participant $j$. They offer a
        timelock\footnote{\url{https://en.bitcoin.it/wiki/Hashed_Timelock_Contracts}.} commit protocol that 
        requires synchrony, and a certified blockchain commit protocol that requires partial synchrony and a certified blockchain.
        The correctness proofs of these algorithms consist of observing reasonable properties that sum up to a correctness argument. 
        But, as far as we know, there is no formal treatment of a cross-chain payment protocol in the presence of (multiplicative) clock skew.
        
        In this paper, we introduce a new specification formalism for cross-chain payment protocols, called 
        \emph{Asynchronous Networks of Timed Automata (ANTA)}. ANTA simplify the representation of cross-chain payment
        to a customer automaton and an escrow automaton that describe states from which outgoing transitions are immediately 
        enabled and states from which outgoing transitions are conditional upon some predicates. These automata allow us to reason
        formally about the liveness and safety of cross-chain payment protocols. ANTA differ
        from Alur and Dill's \emph{timed automata}~\cite{AD94}, their 
        networks \cite{BLLPY96} and I/O automata~\cite{Lyn03} in subtle ways, tuned to the problem at hand.
        We illustrate ANTA by specifying the interledger universal protocol and proving that it solves the time-bounded variant of the 
        cross-chain payment problem.
        Moreover, we fine-tune the protocol to work correctly even in the presence of clock skew.
        
        We also show that there exist no algorithm that can solve the time-bounded cross-chain payment problem without synchrony
        even if all participants either behave  correctly or simply crash.  This impossibility result relies on classic indistinguishability 
        arguments  from the distributed computing literature and highlights an interesting relation
        between the the cross-chain payment problem and the well-known transaction commit problem~\cite{GL06,Had90,Gue95}.
        Inspired by this earlier work on the transaction commit, we define a weaker variant of our problem called the eventual 
        cross-chain payment problem that relaxes the liveness guarantees to be solvable with partial synchrony. This new problem
        differs from the transaction commit problem and its variants like the non-blocking weak atomic commit 
        problem~\cite{Gue95} by tolerating 
        Byzantine failures. It is also different from the interledger problem solved in~\cite{TS15},
        and from the problem solved by the certified blockchain commit protocol of \cite{Her19} in a partially synchronous setting,
        by requiring some liveness.
        In particular, a protocol where all participants always abort is not permitted by our problem specification.
        We propose an algorithm that solves the eventual cross-chain payment problem 
        only assuming partial synchrony, and in the presence of Byzantine failures, using the ANTA formalism.

        Interestingly, the classical notion of \emph{atomicity}, meaning that the entire transaction goes through, or
        is rolled back completely, is not appropriate for this kind of protocols. In the words of \cite{Her19},
        ``This notion of atomicity cannot be guaranteed when parties are potentially malicious:
        the best one can do is to ensure that honest parties cannot be cheated.''

        \section{Related work}
        
       \paragraph{The Interledger protocol.}
        In \cite{TS15} a protocol is presented for payments across payment systems.
        Here a \emph{payment system} can be thought of as an independent bank, where people can have
        accounts. The intended application is in digital payment systems, such as Bitcoin~\cite{Nak08}.
        In \cite{TS15} the payment systems, or rather the functionality they offer, are loosely
        referred to as \emph{ledgers}, and payments between customers of different ledgers as \emph{interledger payments}.
        Following popular terminology, we here speak of \emph{escrows} and \emph{cross-chain payments}.
        
        The protocol from \cite{TS15} generalises to the situation of a longer zigzag than indicated in
        Figure~\ref{trust}, involving $n$ escrows and $n{-}1$ intermediaries Chloe$_i$.
        It comes in two variants: an \emph{atomic mode}, in which transfers are coordinated using an ad-hoc
        group of notaries, selected by the participants, and a \emph{universal mode}, not requiring external coordination.
        The atomic mode relies on more than two-third of the notaries to be reliable, although it may be
        unknown which ones.
        
        The correctness proof by Thomas and Schwartz~\cite{TS15} of the universal mode of the protocol requires the assumption
        of \emph{bounded synchrony} on the communication between the parties in the protocol, simply called
        \emph{synchrony} by Dwork et al.~\cite{DLS88}. This assumption says that any message sent by one of the parties,
        is guaranteed to be received by the intended recipient(s). Moreover, there is a known upper bound on
        the time a message is underway. Additionally, there needs to be a known upper bound on the clock skew
        between participants. Here we point out that this assumption is necessary, in the sense that without
        this assumption, and without the help of mostly reliable notaries, no correct cross-chain payment protocol exists.
        
        To make such a statement, we need to define when we consider a cross-chain payment protocol correct.
        The correctness proof by Thomas and Schwartz~\cite{TS15} consists of reasonable properties that are shown to hold for the chosen protocol.
        These properties are stated in terms of that specific protocol, and
        the reader can infer that they sum up to a correctness argument.
        But no formal statement occurs of what it means for a general cross-chain payment protocol to be correct,
        and this is what we need to make any negative statement about it.

        \paragraph{Other cross-chain technologies.}
        
        In the lightning network, Poon and Dryja~\cite{PT16} can relay payment outside the
        blockchain or ``offchain'' through connected intermediaries, but they require
        synchrony and do not consider clock skew.
                Gazi, Kiayias and Zindros \cite{Gazi18} propose a rigorous formalisation of ledger
        and cross-chain transfers, but focus on proof-of-stake (in particular  Cardano
        Ouroboros). The assumption of ``semi-synchronous communications", which in practice
        corresponds to synchrony as defined by Dwork, Lynch and Stockmeyer \cite{DLS88},
        make their results not extensible to partial synchrony. They do not consider clock skew.
                Herlihy proposes atomic cross-chain swaps~\cite{Her18} to exchange assets between distinct
        blockchains, but only in a synchronous environment without clock skew.
	Ron van der Meyden~\cite{Mey19} verifies a cross-chain swap protocol by modelling a
        timelock predicate as a Boolean variable indicating whether the asset is transferred. This
        approach also requires synchrony, and does not consider clock skew. In \textsc{XClaim}~\cite{ZHLPGK19}
        Zamyatin et al.\ propose a solution to swap blockchain-backed assets. Their protocol assumes
        that adversaries are behaving rationally, and requires synchrony.
         
        Lind et al.~\cite{LNEKPS18} relax the synchrony assumption but require a trusted execution
        environment (TEE). Such a solution cannot be used to solve our problem as it would require
        trusting a third-party, often represented as the manufacturer of this TEE.
        
        Other approaches \cite{Woo16,RG19,Her19} rely on a separate blockchain that plays the same role as our transaction manager
        (cf.\ Section~\ref{sec:transaction manager}).
        However, \cite{Woo16} and \cite{RG19} do not aim at ensuring liveness,
        and \cite{Her19} aims at ensuring liveness only in periods where communication proceeds synchronously.
        Wood~\cite{Woo16} proposes a multi-chain solution that aims at combining heterogeneous 
        blockchains together without trust. As far as we know, it has not been proved that the protocol terminates.
        Ranchal-Pedrosa and Gramoli~\cite{RG19} relax the synchrony assumption using an alternative `child' blockchain
        to the so-called `parent' blockchain in order to execute a series of transfers outside the parent blockchain.
        This protocol does not guarantee that the set of intermediary transfers on the child blockchain eventually take effect.
        Herlihy, Liskov and Shrira~\cite{Her19} model cross-chain deals as a matrix $M$ where 
        $M_{i,j}$ characterises a transfer of some asset from participant $i$ to participant $j$,  and
        offer a  timelock-based solution under the synchrony assumption, without clock skew,
        and a certified blockchain commit protocol that requires partial synchrony and a certified blockchain. 
        As remarked in \cite{Her19}, a strong liveness guarantee is not feasible when merely
        assuming partial synchrony. In this context the targeted cross-chain deal problem admits solutions where all correct
        processes simply abort.  Our corresponding problem differs by requiring a weaker liveness guarantee in that
        it formulates conditions under which a successful transfer is ensured.
        We present a more detailed comparison between our work and that of \cite{Her19} in Section~\ref{deals}.
        
        \paragraph{Crash fault tolerant solutions.}

        The transaction commit problem is a classical problem from the database literature, tackled
        for instance by Gray \& Lamport \cite{GL06}, Guerraoui \cite{Gue95} and
        Hadzilacos \cite{Had90}. It consists of ensuring that either all the processes commit a
        given transaction or all the processes abort this transaction.

        The Non-Blocking Atomic Commitment problem has been formally defined by Guerraoui, for
        asynchronous systems with unreliable failure detectors (encapsulating partial synchrony) in
        a crash (fail-stop) model \cite{Gue95} but not in a Byzantine model.
        Guerraoui proved an impossibility result when the problem requires that
        ``If all participants vote \emph{yes}, and there is no failure, then every correct participant eventually decides
        \emph{commit}''. He then proposes a weaker variant of the problem where the above
        requirement is replaced by ``If all participants vote \emph{yes}, and no participant is ever
        suspected, then every correct participant eventually decides \emph{commit}''. The two problems we
        consider in this paper present a similar distinction; however, our problems target Byzantine
        fault tolerance, not crash fault tolerance.

        Anta, Georgiou and Nicolaou \cite{Anta18} propose a general and rigorous definition of a
        ledger in which they consider the atomic append problem in an asynchronous model. Similarly, their model considers exclusively crash failures.
       
        \section{Model and definitions}\label{correctness}

        \subsection{Participants and money}
        
        We assume $n$ banks or \textit{escrows} $e_0,\dots,e_{n-1}$ and $n{+}1$ \textit{customers} $c_0,\dots,c_{n}$.
        These $2n{+}1$ processes are called \textit{participants}.

        \paragraph{Escrows.} An escrow is a specific type of process that can handle values for other parties in a predefined manner. 
        
        \paragraph{Customers.} Customer $c_0$ is Alice
        and $c_n$ is Bob. The customers $c_1,\dots,c_{n-1}$ are intermediaries in the interaction between
        Alice and Bob; we call them \textit{connectors}, named\ Chloe$_i$.

        \paragraph{Trust.}
        Customers $c_{i-1}$ and $c_{i}$ have accounts at escrow $e_{i-1}$, and trust this escrow ($i=1,\dots,n$).
        We do not assume any other relations of trust.
        \vspace{4ex}

        \noindent\begin{minipage}{\linewidth}
          \begin{center}
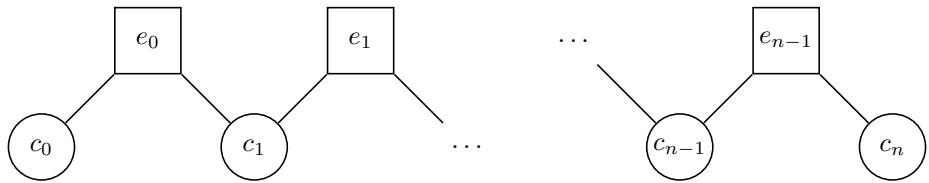

          \begin{tikzpicture}[xscale=0.45,yscale=0.75,>=latex', auto, semithick, node distance=2cm]
            
            \node[state, circle]   (A)                        {$c_0$};
            \node[state, rectangle]           (B) [above right of = A]        {$e_0$};
            \node[state, circle]           (C) [below right of = B]        {$c_1$};
            \node[state, rectangle]           (D) [above right of = C]        {$e_1$};
            
            \node[state, draw=white]           (E) [below right of = D]        {$\ldots$};
            \node[state, draw=white]           (K) [above right of = E]        {$\ldots$};
            
              \node[state, circle]   (F)          [below right of = K]              {$c_{n-1}$};
            \node[state, rectangle]           (G) [above right of = F]        {$e_{n-1}$};
            \node[state, circle]           (H) [below right of = G]        {$c_n$};
          
            \path (A) edge                node        {} (B)
                  (B) edge                node        {} (C)
                  (C) edge                node        {} (D)
                  (D) edge                node        {} (E)
                  (K) edge                node        {} (F)
                  (F) edge                node        {} (G)
                  (G) edge                node        {} (H)   
                  ;
           
          \end{tikzpicture}
          \end{center}
          \captionof{figure}{Customers and escrows.}
          \label{fig:topology}
          \end{minipage}

        \paragraph{Topology.} Not every customer can send value to any other.
        Here we assume that value can be transferred directly only between customers of the same escrow.
        Moreover, any transfer between two customers of an escrow can be modelled as two transfers:
        one from the originating customer to the escrow, and one from the escrow to the receiving customer.
        Thus, the connections from Figure~\ref{fig:topology} describe both the relations of trust and the
        possible transfers of value. The case $n=2$ was depicted in Figure~\ref{trust}.

        We can talk of the \textit{customers of an escrow} to designate the customers that
        are connected to this escrow. We also talk of the \textit{escrows of a customer} to
        designate the escrows that are connected to this customer---only one escrow for
        Alice and Bob, and two escrows for a connector.
        
        \paragraph{Placing value in escrow.}
        Customers can send a specific type of message to ask their escrow to put money aside for them.
        In particular, two customers may make a deal with an escrow to place value originating from
        the first customer ``in escrow'', and, after a predefined period, depending on which
        conditions are met, either complete the transfer to the second customer, or return the value
        to the first one.
        
        \paragraph{Abstracting the transfer of value.} There are many ways of transferring value from one party to another: one could give a physical object, such as cash or gold, to someone else. One could also send a transaction to transfer cryptocurrency or tokens on a distributed ledger, or send a specific message on some banking application. We do not care of how this process is implemented, and we suppose that the participants have already agreed upon the value they expect to be transferred. 

        We use therefore a unified notation: $s(p, \$)$ to say ``send a message to trigger the
        transfer of some previously agreed-upon value to participant $p$''.
        
        \paragraph{Chloe's fee.}
        As Chloe helps out transferring value from Alice to Bob, it is only reasonable that she is
        paid a small commission. Hence the value transferred from Alice to Chloe might be larger than the
        value transferred from Chloe to Bob. Additionally, these values may be expressed in
        different currencies, with possibly fluctuating exchange rates, or they may be objects such
        as bags of flour that have a quality-dependent value. Deciding which values to transfer may
        thus be an interesting problem. However, it is entirely orthogonal to the matter discussed
        in this paper, and hence we shall not consider it any further.

        \subsection{Communication and computation model}

        The following model holds for the rest of the paper. When necessary, we will mention
        explicitly if we have to add some assumptions, such as synchrony of communication.

        \paragraph{Communication.} We assume that the network does not lose, duplicate, modify or
        create messages. However messages can be delayed for arbitrarily long: by default we assume
        asynchronous communication.
        
        \paragraph{Authentication.}

        We assume that each customer can sign a message with his unique
        identifier, thanks to an idealised public key infrastructure. No other process can
        forge its signature, and any process (including escrows) can verify it. An escrow
        is not necessarily able to sign a message.

        \paragraph{Certificates.} As a consequence, any customer can issue a certificate by signing
        an appropriate message. For example, Bob can issue a
        receipt certificate to Alice by signing a \textsc{received} message. By combining several
        signatures, one can define threshold certificates, for instance requiring
        the signature of a \textsc{commit} message by strictly more than $n/3$ customers
        or notaries. A correct implementation of certificates should take care of preventing replay attacks.

        \paragraph{Faults.} We do not make any assumption on the behaviour of the processes
        \textit{a priori}. Later we will define a protocol, and processes will either follow the
        protocol or deviate from the protocol.

        \subsection{Synchronous versus asynchronous communication}

        In the literature five levels of synchrony in communication can be distinguished. As indicated in the table
        below, terminology is not uniform between the concurrency and the distributed systems communities.
        \begin{center}
        \begin{tabular}{|c|c|}
          \hline
          \emph{concurrency} & \emph{distributed systems}\rule{0pt}{10pt}\\
          \hline
          \hline
          synchronous communication & rendezvous\rule{0pt}{10pt} \\
          \hline
          \multicolumn{2}{|c|}{I/O automata\rule{0pt}{10pt}} \\
          \hline
          asynchronous communication & synchronous communication\rule{0pt}{10pt} \\
          \cline{2-2} 
          & partially synchronous communication\rule{0pt}{10pt} \\
          \cline{2-2} 
          & asynchronous communication\rule{0pt}{10pt} \\
          \hline
        \end{tabular}
        \end{center}
        In the concurrency community, communication is called \emph{synchronous} if sending\pagebreak
        and receiving occur simultaneously, and the sender cannot proceed before receipt of the
        message is complete. This is the typical paradigm in process algebras such as CCS \cite{Mi89}.
        Communication is called \emph{asynchronous} if sending occurs strictly before receipt, and
        the sender can proceed after sending regardless of the state of the recipient(s).
        An intermediate form is modelled by I/O automata~\cite{Lyn03}; here sending and receipt is
        assumed to occur simultaneously, yet the sender proceeds after sending regardless of the
        state of possible receivers.

        In the distributed systems community all communication is by default assumed to be
        asynchronous in the sense above; synchronous communication as defined above is
        sometimes called a \emph{rendezvous}. Following \cite{DLS88}, communication is called
        \emph{synchronous} when there is a known upperbound on the time messages can be in transit,
        and moreover there is a known upperbound on the relative clock skew between parallel processes.
        It is \emph{partially synchronous} when these upperbounds exist but are not known, or when
        it is known that after a finite but unknown amount of time these upperbounds will come into effect.
        If these conditions are not met, communication is deemed \emph{asynchronous}.

        The present paper follows the latter terminology; we speak of \emph{fully synchronised}
        communication when referring to synchronous communication from concurrency theory.
        
        \subsection{Informal description of the syntax and semantics of ANTA}

        Although strongly inspired by the timed automata of
        Alur and Dill \cite{AD94} and their networks \cite{BLLPY96}, our ANTA differs from those models in
        subtle ways, tuned to the problem at hand. The first A refers to asynchronous communication
        in the concurrency-theoretic sense, and contrasts with the fully synchronised communication
        that is assumed in NTA \cite{BLLPY96}.
        
        In Figure~\ref{automata} our time-bounded cross-chain protocol---essentially the
        universal mode of the interledger protocol from \cite{TS15}---is depicted as an ANTA\@.
        There is one automaton for each participant in the protocol, that is, for each escrow $e_i$
        ($i=1,\dots,n{-}1$) and each customer $c_i$ ($i=1,\dots,n$). Each automaton is equipped with a
        unique identifier, in this case $e_i$ and $c_i$.
        Each automaton has a finite number of \emph{states}, depicted as circles, one of which is marked as
        the \emph{initial state}, indicated by a short incoming arrow. The states are partitioned into
        \emph{termination states}, indicated by a double circle, \emph{input states}, coloured white, and
        \emph{output states}, coloured grey or black. Furthermore there are finitely many \emph{transitions},
        indicated as arrows between states. The transitions are partitioned into \emph{input transitions},
        labelled $r(id,m)$, \emph{output transitions}, labelled $s(id,m)$, and \emph{time-out transitions},
        labelled by arithmetical formulas $\psi$ featuring the variable $\now$. Here $id$ must be the
        identifier of another automaton in the network, and $m$ a message, taken from a set {\tt MSG} of
        allowed messages.
        Whereas each input and time-out transition has a unique label $r(id,m)$ and $\psi$, respectively,
        an output transition may have multiple labels $s(id,m)$.
        All transitions may have additional labels of the form $x:=\now$ for some variable $x$.
        A termination state has no outgoing transitions.
        An output state has exactly one outgoing transition, which much be an output transition.
        An input state may have any number of outgoing input and time-out transitions, and no outgoing output transitions.
        
        Each automaton keeps an internal clock, whose value, a real number, is stored in the variable $\now$.
        The value of $\now$ increases monotonically as time goes on. All variables maintained by an automaton
        are local to that automata, and not accessible by other automata in the network.
        Each transition is assumed to occur instantaneously, at a particular point in time.
        In case a transition occurs that is labelled by an assignment $x:=\now$,
        the variable $x$ will remember the point in time when the transition took place.
        Such a variable may be used later in time-out formulas.
        When (or shortly before, see below) an output transition with label $s(id,m)$ occurs, the automaton
        sends the message $m$ to the automaton with identifier $id$.
        A time-out transition labelled $\psi$ is \emph{enabled} at a time $\now$ when the formula $\psi$
        evaluates to {\tt true}. An input transition labelled $r(id,m)$ is \emph{enabled} only at a time when the
        automaton receives the message $m$ from the automaton $id$ in the network.
        Whereas an output transition may be scheduled to occur by the automata at any time, an input or
        time-out transition can occur only when enabled.
        
        When an automaton is not performing a transition, it must be in exactly one of its states.
        It starts at the initial state, where its clock is initialised with an arbitrary value.
        When the automaton is in an input state, it stays there (possibly forever) until one of its outgoing
        transitions becomes enabled; in that case that transition will be taken immediately.
        In case multiple transitions become enabled simultaneously, the choice is non-deterministic.
        When the automaton reaches a termination state, it halts.
        
        In general, an output state is labelled with a positive \emph{time-out} value $\mathit{to}\in\IR\cup\{\infty\}$.
        It constitutes a strict upperbound on the time the automaton will stay in that state.
        In case the automaton enters an output state at time $\now$, it will take
        its outgoing transition between times $\now$ and $\now+\mathit{to}$.
        In case its output transition has multiple labels $s(id,m)$,
        the corresponding transmissions need not occur simultaneously; they can occur in any order between
        $\now$ and $\now+\mathit{to}$. The output transition is considered to
        be taken when the last of these actions occurs.
        In this paper time-out values are indicated by colouring: for the grey states it is the constant
        $\epsilon$ from Section~\ref{reaction speed}, while for the black state it is $\infty$.

        \subsection{Cross-chain payment protocol}

        A cross-chain payment protocol prescribes a behaviour for each of the participants in the protocol, the
        escrows and the customers. 
        Let $\chi$ be a certificate signed by Bob saying that Alice's obligation to pay him has been met.

        \begin{definition}[Time-bounded cross-chain payment protocol]\label{time-bounded correct}\vspace{-3pt}
          A cross-chain payment protocol is a \textit{time-bounded cross-chain payment protocol} if it satisfies the following properties:
          \begin{itemize}
            \item[C] \textbf{Consistency.} For each participant in the protocol it is possible to abide by the protocol.
            \item[T] \textbf{Time-bounded termination.} Each customer that abides by the
              protocol, and either makes a payment or issues a certificate, terminates within an
              a priori known period, provided her escrows abide by the protocol.
            \item[ES] \textbf{Escrow security.} Each escrow that abides by the protocol does not loose money.
            \item[CS] \textbf{Customer security.}
              \begin{itemize}
                \item[CS1] Upon termination, if Alice and her escrow abide by the protocol, Alice  has either got her money back or received the statement $\chi$.
                \item[CS2] Upon termination, if Bob and his escrow abide by the protocol, Bob has either received the money or not issued certificate $\chi$.
                \item[CS3] Upon termination, each connector that abides by the protocol has got her money back,
                provided her escrows abide by the protocol.
              \end{itemize}
            \item[L] \textbf{Strong liveness.} If all parties abide by the protocol, Bob is paid eventually.
          \end{itemize}
        \end{definition}

        Requirement C (\emph{consistency} of the protocol) is essential.
        In the absence of this requirement, any protocol that prescribes an impossible task for each
        participant would be a correct cross-chain payment protocol (since it trivially meets
        T, ES, CS and L).
        
        Requirements ES and CS (the \emph{safety} properties) say that if a participant abides by the
        protocol, nothing really bad can happen to her. These requirements do not assume that any other
        participant abides by the protocol, and should hold no matter how malicious the other participants
        turn out to be. The only exception to that is that the safety properties for a customer (CS)
        are guaranteed only when the escrow(s) of this customer abide by the protocol.
        
        Property L, saying that the protocol serves its intended purpose, is the only one that is
        contingent on \emph{all} parties abiding by the protocol.

        \section{A time-bounded cross-chain payment protocol}

        \subsection{Assumptions}\label{bounded synchrony}

        \paragraph{Synchrony.}

        The assumption of \emph{synchrony} considered e.g.\ by \cite{DLS88}, and called \emph{bounded synchrony} by \cite{TS15},
        says `that there is a fixed upper bound $\Delta$ on the time for messages to be delivered
        (\emph{communication is synchronous}) and a fixed upper bound $\Phi$ on the rate at which one
        processor's clock can run faster then another's (\emph{processors are synchronous}), and that these
        bounds are known a priori and can be ``built into'' the protocol.' \cite{DLS88}\linebreak[3]
        A consequence of this assumption is that if participant $p_1$ sends at its local time $t_0$
        a message to participant $p_2$, and participant $p_2$ takes $t$ units of its local time to send an
        answer back to $p_2$, then $p_1$ can count on arrival of that reply no later than time
        $t_0+ \Phi\cdot t + 2\cdot\Delta$.

        \paragraph{Bounded reaction speed.}\label{reaction speed}
        
        When a participant in the protocol receives a message, it will take some time to calculate the right
        response and then to transmit that response. Here it will be essential that that amount of
        time is bounded. So we assume a  reaction time $\epsilon > 0$ such that any received message
        can be answered within time $\epsilon$.

        \subsection{A cross-chain payment protocol formalised as an ANTA}

        In this section we formally model the \textit{universal mode} of the interledger
        protocol from~\cite{TS15} as an ANTA\@.
        Moreover, we replace the timing constants employed in \cite{TS15} by parameters, and
        then calculate the optimal value of these parameters to unsure correctness of the protocol
        in the presence of clock skew.

        \begin{figure}
          \footnotesize
          \input{automata}
          \centerline{\box\graph}
          \caption{Automata representing escrows and customers}
          \label{automata}
          \end{figure}
        
        To interpret Figure~\ref{automata}, all that is left to do is specify the messages that are
        exchanged between escrows and their customers. We consider 4 kinds of messages.
        One is the certificate $\chi$, signed by Bob, saying that Alice's obligation to pay him has been met.
        Another is the value $\$$ that is transmitted from one participant to another.
        The remaining messages are promises made by escrow $e_i$ to its customers $c_i$ and $c_{i+1}$, respectively:%
        \vspace{1ex}

        $G(d) :=$ ``\parbox[t]{5in}{I guarantee that if I receive $\$$ from you at my local time $w$,\\
                      then I will send you either $\$$ or $\chi$ by my local time $w+d$.''}\vspace{1ex}
        
        $P(a) :=$ ``\parbox[t]{5in}{I promise that if I receive $\chi$ from you at my local time $v$, with $v<\now+a$,\\
                      then I will send you $\$$ by my local time $v+\epsilon$.''}\vspace{1ex}
        
        The automata of Figure~\ref{automata} can be informally described as follows:
        A escrow $e_i$ first send promise $G(d_i)$ to its (upstream) customer $c_i$.
        Here ``upstream'' refers to the flow of money.
        The precise values of $d_i$ will be determined later; here they are simply parameters in
        the design of the protocol. Then it awaits receipt of the money/value from customer $c_i$.
        If the money does arrive, the escrow issues promise $P(a_i)$ to its downstream customer
        $c_{i+1}$ as soon as it can.
        It remembers the time this promise was issued as $u$.
        Then it awaits receipt of the certificate $\chi$ from customer $c_{i+1}$.
        If the certificate does not arrive by time $u+a_i$, a time-out occurs, and the escrow refunds the money to
        customer $c_i$. If the money does arrive in time, the escrow reacts by forwarding the certificate to
        customer $c_i$, and forwarding the money to customer $c_{i+1}$.
        
        A connector Chloe$_i$ starts by awaiting promises $G(d_i)$ from her downstream escrow $e_i$, and
        $P(a_{i-1})$ from her upstream escrow $e_{i-1}$. Then she proceeds by sending the money to
        escrow $e_i$. After sending the money, Chloe$_i$ waits for escrow $e_i$ to send her either the
        certificate $\chi$ or the money back. In the latter case, her work is done; in the former, she
        forwards the certificate to escrow $e_{i-1}$ and awaits for the money to be send by escrow $e_{i-1}$.
        
        The automata for Alice and Bob are both simplifications of the one for Chloe$_i$.
        Alice awaits promise $G(d_0)$ from her escrow, and then sends the
        escrow the money. The protocol allows her to wait arbitrary long before taking that step.
        Subsequently, she patiently await for either the return of her money, or certificate $\chi$.
        Bob awaits promise $P(a_{n-1})$ from his escrow, and then issues certificate $\chi$ and sends it to
        his escrow. He then awaits the money.
        
        \subsection{Running the protocol}\label{run}
        
        The protocol consists of two parts. The \emph{set-up} involves the sending and receiving of the
        promises $G(d_i)$. As these promises are not time-sensitive, they can be exchanged months before the
        \emph{active part} of the protocol is ran, consisting of all other actions.
        Here an \emph{action} is an entity $\textit{act}\color{blue} \MVAt p$, with $\textit{act}$ a transition
        label, and $p$ the identifier of the participant taking that transition.
        The active part has essentially only one successful run, i.e., when never taking a time-out transition,
        consisting of the following actions, executed in the following order. Actions separated by
        commas can be executed in either order.
        \begin{center}
        \begin{tabular}{@{}ll@{}c@{\hspace{-6.25pt}}rr@{}}
        $s(e_0,\$) \color{blue} \MVAt c_0$ & $r(c_0,\$)\color{blue}\MVAt  e_0$ &&
        $s(c_1,P(a_0)) \color{blue}\MVAt e_0$ & $r(e_0,P(a_0)) \color{blue} \MVAt c_1$ \\
        $s(e_1,\$) \color{blue} \MVAt c_1$ & $r(c_1,\$)\color{blue}\MVAt  e_1$ &&
        $s(c_2,P(a_1)) \color{blue}\MVAt e_1$ & $r(e_1,P(a_1)) \color{blue} \MVAt c_2$ \\
        \dots \\
        $s(e_{n-1},\$) \color{blue} \MVAt c_{n-1}$ & $r(c_{n-1},\$)\color{blue}\MVAt  e_{n-1}$ &&
        $s(c_n,P(a_{n-1})) \color{blue}\MVAt e_{n-1}$ & $r(e_{n-1},P(a_{n-1})) \color{blue} \MVAt c_n$ \\
        $s(e_{n-1},\chi) \color{blue} \MVAt c_n$ & $r(c_n,\chi)\color{blue}\MVAt  e_{n-1}$ &
        $s(c_{n},\$) ,~s(c_{n-1},\chi) \color{blue}\MVAt e_{n-1}$ &
        $r(e_{n-1},\$) \color{blue} \MVAt c_{n}$ & ,\hfill$r(e_{n-1},\chi) \color{blue} \MVAt c_{n-1}$ \\
        \dots \\
        $s(e_{1},\chi) \color{blue} \MVAt c_2$ & $r(c_2,\chi)\color{blue}\MVAt  e_{1}$ &
        $s(c_{2},\$), s(c_{1},\chi) \color{blue}\MVAt e_{1}$ &
        $r(e_{1},\$) \color{blue} \MVAt c_{2}$ & ,\hfill$r(e_{1},\chi) \color{blue} \MVAt c_{1}$ \\
        $s(e_{0},\chi) \color{blue} \MVAt c_1$ & $r(c_1,\chi)\color{blue}\MVAt  e_{0}$ &
        $s(c_{1},\$), s(c_{0},\chi) \color{blue}\MVAt e_{0}$ &
        $r(e_{0},\$) \color{blue} \MVAt c_{1}$ & ,\hfill$r(e_{0},\chi) \color{blue} \MVAt c_{0}$ \\
        \end{tabular}
        \end{center}

        \subsection{Initialisation}\label{initialisation}
        
        There is a scenario where Chloe$_i$ will never send money to escrow $i$, namely when she receives
        promise $P(a_{i-1})$ from escrow $e_{i-1}$ before she receives promise $G(d_i)$ from escrow $e_i$.
        In that case the receipt of $P(a_{i-1})$ does not trigger a transition, and Chloe$_i$ will remain
        stuck in her second state. We now modify the protocol in such a way that that scenario can not
        occur. This can be done in three ways; it does not matter which of the three modifications we take.
        \begin{enumerate}
        \item One solution is to make Alice wait before starting the active part of the protocol (by leaving
          the black state) until a point in time when she is sure that all parties Chloe$_i$ already have
          received promise $G(d_i)$. If we assume that all parties start at the same time, using the
          reasoning of Section~\ref{bounded synchrony}, Alice has to wait
          at most $\Phi\cdot\epsilon + \Delta$ before this point has been reached.
          The only drawback of this solution is that it may be hard to realise that all parties start at the same time.
        \item Another approach is to assume that the set-up phase occurred long before Alice actually wants
          to send money to Bob. It may be part of a general banking agreement. Possibly each escrow always
          offers promises $G(d)$ for different values of $d$, and when sending money to escrow $i$, Chloe$_i$
          simply tags it as taking advantage of promise $G(d_i)$. In this approach, the protocol does not
          feature the transitions labelled $s(c_i,G(d_i))$ and $r(e_i,G(d_i))$, with the initial states
          shifted accordingly. Still, the promise $G(d_i)$ counts as having been made to customer $c_i$ by
          escrow $e_i$.
        \item A final solution is to introduce to the protocol a message ``We are ready'', sent by
          Chloe$_{n-1}$ to escrow $e_{n-2}$, and forwarded, via Chloe$_i$ and escrow $e_{i-1}$ all the way
          to Alice. Each Chloe$_i$ forwards the ``We are ready'' message only after receiving promise
          $G(d_i)$ from escrow $e_i$, so when Alice receives the ``We are ready'' message she can safely
          initiate the transfer. 
        \end{enumerate}

        \subsection{Correctness of the protocol}
        
        Now we show that the protocol from Figure~\ref{automata} is correct, in the sense that it satisfies the properties of
        Definition~\ref{time-bounded correct}, when making the assumptions of Section~\ref{bounded synchrony}.
        In doing so, we also calculate the values of the parameters $d_i$ and $a_i$.
        
        \paragraph{Consistency.}
        To check that the protocol is consistent, in the sense that each participant can abide by it,
        we first of all invoke the assumption of bounded reaction speed, described in Section~\ref{reaction speed},
        and use that the constant $\epsilon$  assumed to exist in Section~\ref{reaction speed} is in fact
        the time-out value associated to most output states. This ensures that it is always possible to send
        messages in a timely manner.
        
        The only remaining potential failure of consistency is when the protocol prescribes the transmission
        of a resource that it is not available. Assuming that the sending of promises and money is not an
        obstacle (Chloe has been selected, in part, for having this kind of money available), the only issue
        could be the sending of the certificate signed by Bob. For anyone but Bob this can only be done after
        receiving it first. However, a simple inspection of the automata of the escrows, Chloe$_i$ and
        Alice shows that any transition sending the certificate is preceded by a transition receiving it.
        This establishes requirement C\@.
        
        \paragraph{Escrow security.}
        
        That escrows cannot loose money (requirement ES) follows immediately from the observation that an
        escrow spends the money only after receiving it. This follows from the order of the
        transitions in the automaton for the escrows.
        
        \paragraph{Honesty.}

        Although not part of Definition~\ref{time-bounded correct}, we show that
        an escrow that issues a promise always keeps that promise, when abiding by the protocol.
        This property (H) will be be used below to establish CS.
        
        To show H, suppose the escrow $e_i$ issues promise $P(a_i)$, and subsequently
        receives the certificate $\chi$ from customer $c_{i+i}$ at a time $v<u+a_i$, where $u$ refers to the
        time promise $P(a_i,\epsilon)$ was issued. Then it is too soon for the time-out transition, so the transition
        labelled $r(c_{i+1},\chi)$ in the automaton of $e_i$ will be taken, at time $v$. The automaton shows
        that $s(c_{i+1},\chi)$ will occur by time $v+\epsilon$, thus fulfilling the promise.

        To show that an escrow that issues promise $G$ always keeps it, when abiding by the protocol,
        suppose the escrow $e_i$ receives the money at a time $w$. Then the transition
        labelled $r(c_i,\$)$ in the automaton of $e_i$ will be taken, at time $w$. The automaton shows that
        either $s(c_i,\$)$ will occur by time $w+\epsilon+a_i+\epsilon$, or 
        $s(c_i,\chi)$ will occur by time $w+\epsilon+a_i+\epsilon$.
        Thus, to guarantee that promise $G$ is met, we need to choose $d_i$ and $a_i$ in such a way that\vspace{-1.2ex}
        \begin{equation}\label{escrow}
        d_i \geq a_i+2\epsilon \vspace{.7ex}
        \end{equation}
        for $i=0,\dots n{-}1$.
        In fact, making the promise as strong as possible yields $d_i:=a_i+2\epsilon$.
        When this condition is met, we have established requirement H\@.

        \paragraph{Customer security and time-bounded termination.}
        We will check time-bounded termination (T) together with customer security (CS).
        To check requirement CS1, suppose that Alice will make the payment $s(e_0,\$)$, at time $t$.
        Then earlier she has received promise $G(d_0)$ from escrow $e_0$.
        This promise ensures Alice that escrow $e_0$
        will send her either $\$$ or $\chi$ by its local time $w+d_0$, where $w$ is the time Alice's payment
        is received. Consequently, by the reasoning of Section~\ref{bounded synchrony}, using the assumption of bounded synchrony,
        Alice will receive either certificate $\chi$ or her money back by time
        $t+\Phi\cdot d_0+2\cdot \Delta$.
        
        To check requirement CS2, suppose that Bob issues certificate $\chi$, at time $x$.
        Then earlier, at time $t$, he has received promise $P(a_{n-1})$ from escrow $e_{n-1}$.
        Moreover, $x < t+\epsilon$.
        For the promise to be meaningful, his certificate needs to arrive at
        $e_{n-1}$ before time $u+a_{n-1}$, where $u$ refers to the local time at $e_{n-1}$ when the promise was issued.
        By the reasoning of Section~\ref{bounded synchrony}, using the assumption of bounded synchrony,
        Bob's certificate will arrive at escrow $e_{n-1}$ before time $u +\Phi\cdot\epsilon+2\cdot\Delta$.
        Hence, we need to choose $a_{n-1}$ in such a way that
        \begin{equation}\label{Bob}
        a_{n-1}\geq\Phi\cdot\epsilon+2\cdot\Delta\;.
        \end{equation}
        When this requirement is met, the promise ensures Bob that escrow $e_{n-1}$ will send him the money
        by its local time $v+\epsilon$, where $v$ is the time Bob's certificate is received by $e_{n-1}$.
        Consequently, Bob will receive payment by time $x+\Phi\cdot \epsilon+2\cdot \Delta$.
        
        \begin{wrapfigure}[16]{r}{0.5\textwidth}
          \vspace{-4ex}
          \input{Chloe}
          \hfill\box\graph
        \end{wrapfigure}
        To check requirement CS3, suppose that Chloe$_i$ will make the payment $s(e_i,\$)$, at time $t_0$.
        Then earlier, she has received promise $G(d_i)$ from escrow $e_i$ and promise $P(a_{i-1})$
        from escrow $e_{i-1}$, the latter at time $t$. Moreover, $t_0< t+\epsilon$.
        Promise $G(d_i)$ ensures Chloe$_i$ that escrow $e_i$
        will send her either $\$$ or $\chi$ by its local time $w+d_i$, where $w$ is the time Chloe$_i$'s payment
        is received. Consequently, $c_i$ will receive either certificate $\chi$ or her money back by time
        $t_0+\Phi\cdot d_i+2\cdot \Delta$.
        
        Continuing with the case that she receives $\chi$ rather then her money back, she will forward
        $\chi$ to escrow $e_{i-1}$ by time $t_0+\Phi\cdot d_i+2\cdot \Delta + \epsilon$, which is before
        $t+\epsilon+\Phi\cdot d_i+2\cdot \Delta + \epsilon$.
        Hence it arrives at $e_{i-1}$ by its local time $u+2\cdot\Phi\cdot\epsilon+\Phi\cdot d_i+4\cdot \Delta$,
        where $u$ is the time promise $P(a_{i-1})$ was issued.
        Here we use that $\Delta$ is a valid upperbound on transition times by anyone's clock, and
        that there is no need to square $\Phi$ in $\Phi\cdot d_i$, as also the clock skew between escrows
        $e_i$ and $e_{i-1}$ is bounded by $\Phi$. This calculation is illustrated by the
          message sequence diagram above. Since $\chi$ needs to arrive at $e_{i-1}$ before time
        $u+a_{i-1}$ in order for promise $P(a_{i-1})$ to be meaningful, we need to pick
        \begin{equation}\label{alpha}
        a_{i-1}\geq 2\cdot\Phi\cdot\epsilon+\Phi\cdot d_i+4\cdot \Delta
        \end{equation}
        for $i=1,\dots,n{-}1$.
        When (\ref{alpha}) holds, promise $P(a_{i-1})$ ensures Chloe$_i$ that escrow $e_{i-1}$ will
        send her the money by its local time $v+\epsilon$, where $v$ is the time the
        certificate is received by $e_{i-1}$.  Consequently, $c_i$ will receive $\$$ by time
        $t_0+\Phi\cdot d_i+4\cdot \Delta + \epsilon + \Phi\cdot\epsilon$.
        Hence, assuming (\ref{alpha}), CS3 is guaranteed.
        
        Choosing $=$ for $\geq$ in (\ref{escrow})--(\ref{alpha}), we ensure requirement CS by solving these equations.
        In particular, for $i=0,\dots,n{-}1$,
        $$a_{i} := \Phi^{n-1-i}\cdot(\Phi\cdot\epsilon+2\cdot \Delta) + \sum_{j=i+1}^{n-1} 4\cdot\Phi^{j-i-1}\cdot (\Phi\cdot\epsilon + \Delta)\;.$$
        For $i=n{-}1$ this follows by (2).
        Assume we have it for $i{+}1$.
        Then
        $$a_{i+1} = \Phi^{n-2-i}\cdot(\Phi\cdot\epsilon+2\cdot \Delta) + \sum_{j=i+2}^{n-1} 4\cdot\Phi^{j-i-2}\cdot (\Phi\cdot\epsilon + \Delta)\;.$$
        Now, applying (\ref{escrow}) and (\ref{alpha}),
        multiply by $\Phi$ and add $4\cdot\Phi\cdot\epsilon + 4\cdot\Delta$.
                
\paragraph{Liveness.}
It remains to check property L\@. Suppose that all parties abide by the protocol.  By the reasoning
in Section~\ref{initialisation} we may assume that the action $s(e_0,\$)\color{blue} \MVAt c_o$ (using the
terminology of Section~\ref{run}) of Alice sending money to her escrow will not take place until all
actions $s(c_i,G(d_i))\color{blue} \MVAt e_i$ and $r(e_i,G(d_i))\color{blue} \MVAt c_i$ have occurred.
In terms of Section~\ref{run} we show that in the remaining active phase of the protocol
at least the prefix of the displayed sequence of actions
until and including $s(c_n,\$)\color{blue} \MVAt e_{n-1}$ will take place, in that order, and not interleaved
with any other actions. When this happens, Bob must be in his
third state, and the required action $r(e_{n-1},\$)\color{blue} \MVAt c_{n}$ will follow surely.

Towards a contradiction, let the initial behaviour of the active part of the protocol be a strict prefix of this
sequence, where $a$ is the first action in the sequence that does not occur as scheduled.
A simple case analysis shows that when $a$ is scheduled, in fact no action other than $a$ is possible.

The action $a$ cannot be of the form $s(e_i,\$)\color{blue} \MVAt c_i$, because when this action is due, customer
$c_i$ is in a state where this action must be taken within time $\epsilon$. An exception is the case
$n=0$, but also here the action must be taken in within a finite amount of time (or there is nothing
to verify).

The action $a$ cannot be $r(c_i,\$)\color{blue} \MVAt e_i$ either, because each message sent must arrive
eventually, and the receiving party $e_i$ is in its second state, and thus able to perform the
receive action.

Similarly, $a$ cannot be $s(c_{i+1},P(a_i))\color{blue} \MVAt e_i$, as here the sender $e_i$ must be
in its third state.

Since all actions $r(e_i,G(d_i))\color{blue} \MVAt c_i$ have already occurred, $a$ cannot be of the form
$r(e_{i-1},P(a_{i-1}))\color{blue} \MVAt c_i$.

The case $a=s(e_{n-1},\chi)\color{blue} \MVAt c_n$ can be excluded, as here Bob must be in his second state.

If $a=r(c_n,\chi)\color{blue} \MVAt e_{n-1}$, then the recipient $e_{n-1}$ must be in its fourth
state and, due to the careful choice of $a_{n_1}$ (see (\ref{Bob})), the time-out transition can not intervene.
So that choice of $a$ is excluded as well.

The argument against $a=s(c_n,\$)\color{blue} \MVAt e_{n-1}$ is trivial.

\section{Impossibility under partial synchrony of communications}

When communications can experience arbitrarily long delays, it is not possible to expect from a protocol to terminate in an a priori known amount of time. If we want to perform cross-chain payments in a partially synchronous setting, it is therefore necessary to define a different class of cross-chain payment protocols. The obvious idea is to remove any time bound in the definition. But as we will show, this is not enough to make the problem solvable.

\begin{definition}[Eventually terminating cross-chain payment protocol with strong liveness guarantees]
  A cross-chain payment protocol is an \textit{eventually terminating cross-chain payment protocol with strong liveness guarantees} it satisfies all the properties of Definition \ref{time-bounded correct} except Property T which is replaced by the following:

  \begin{itemize}
  \item[T'] \textbf{Eventual termination.} Each customer that abides by the protocol,
    and either makes a payment or issues a certificate, terminates \emph{eventually}, provided her escrows abide by the protocol.
  \end{itemize}
\end{definition}

\begin{theorem}
  If communications are partially synchronous (assuming that customers may crash), then there is no eventually terminating cross-chain payment protocol with strong liveness guarantees.
\end{theorem}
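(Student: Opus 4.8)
The plan is to argue by contradiction via an indistinguishability argument, in the style of the classical impossibility results for non-blocking atomic commitment; at bottom the obstruction is that of the two-generals problem. Suppose $P$ is an eventually terminating cross-chain payment protocol with strong liveness guarantees. I would first isolate the essential conflict, which involves only Bob ($c_n$), his escrow $e_{n-1}$, and the party that must fund $e_{n-1}$ --- the connector $c_{n-1}$, or Alice herself when $n=1$. I will phrase everything for $n=1$; the general case is identical after replacing $e_0$ by $e_{n-1}$, Alice by $c_{n-1}$, and Property~CS1 by Property~CS3, plus routine bookkeeping about which upstream transfers have already happened. For definiteness take $P$ deterministic; a randomised $P$ is handled by fixing its coin tosses and exhibiting one offending execution.

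Next I would determine, from the correctness requirements alone, what any such $P$ does in the \emph{all-correct execution} $R$, in which everyone abides, nobody crashes, and messages are delivered promptly. By Property~L Bob is paid in $R$; the credit to Bob can come only from $e_0$, the sole participant connected to him, and by Property~ES a compliant escrow never pays out more than it has taken in, so $e_0$ must first have received that value --- necessarily from Alice, its only other customer. Hence Alice transfers money to $e_0$ in $R$. By Property~CS1 Alice then ends up with her money back or with the certificate $\chi$; the former would make $e_0$ pay out more than it took in, contradicting ES, so Alice ends up with $\chi$. Since $\chi$ is a fresh certificate only Bob can sign and Alice is connected only to $e_0$, it is $e_0$ that hands $\chi$ to Alice, so $e_0$ must have received $\chi$ from Bob. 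Thus in $R$ Bob issues $\chi$, at some time $q$, and sends it to $e_0$.

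Now I would split on whether, in $R$, Alice's transfer to $e_0$ is initiated before time $q$. If it is not, take the execution that copies $R$ up to and including time $q$ --- so Bob issues $\chi$ --- and then crashes Alice before she sends any money: $e_0$ eventually receives $\chi$ but is never funded, so by ES it cannot pay Bob, yet Bob has issued $\chi$ and both he and $e_0$ comply, so by Property~CS2 he must be paid before the termination forced by Property~T$'$ --- a contradiction. Otherwise Alice's transfer is initiated before $q$, and I build two executions that $e_0$ cannot tell apart. In $E'$, everything runs as in $R$ until just before $q$, when Bob crashes before issuing $\chi$; then $\chi$ is never created, so, with Alice and $e_0$ both complying and Alice having paid, Properties~T$'$ and CS1 force $e_0$ to refund Alice, say at time $T^\star$. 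In $E$, Bob does not crash: he issues $\chi$ at $q$ and then follows $P$, but every message he sends $e_0$ from time $q$ onward --- in particular the one carrying $\chi$ --- is held in transit past $T^\star$, which partial synchrony permits. Up to time $T^\star$ the local history of $e_0$ is then the same in $E$ and $E'$ (it has received Alice's money but nothing from Bob in either), so $e_0$ refunds Alice at $T^\star$ in $E$ as well, still complying. After $T^\star$ let communication stabilise so that $\chi$ finally reaches $e_0$; having already refunded Alice, $e_0$ has no money left and so, by ES, does not pay Bob --- yet in $E$ Bob did issue $\chi$ and both he and $e_0$ comply, so Property~CS2 together with the termination forced by T$'$ is violated. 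Hence no such $P$ exists.

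The main obstacle, I expect, is not the indistinguishability core but the extraction carried out in the previous two paragraphs: from the bare specification (C, T$'$, ES, CS, L), the fixed network topology, and the unforgeability and freshness of $\chi$, one must pin down enough of the behaviour of an \emph{arbitrary} protocol --- that its honest run really does route $\chi$ from Bob to $e_0$, that $e_0$ is forced to refund whenever $\chi$ fails to arrive in time, and, for general $n$, that the last connector has already funded $e_{n-1}$ at the critical moment (and has not yet been reached by the returning $\chi$) so that CS3 can be turned against $e_{n-1}$. This is the counterpart of isolating the critical configuration in FLP-style proofs, done here purely by conservation-of-money and message-provenance reasoning. Clock skew plays no part: partial synchrony already deprives $e_0$ of any timing cue that could separate a slow $\chi$ from an absent one.
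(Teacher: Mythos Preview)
Your argument is correct and rests on the same indistinguishability core as the paper's proof, but the decomposition differs. The paper does not reduce to $n=1$; instead it works directly with general $n$, identifies the \emph{last customer} $c_i$ (Bob or some Chloe) who holds $\chi$ before it reaches Alice, argues that $c_i$ cannot yet have been paid at that moment (else $c_i$ could crash holding both money and certificate, forcing a violation of T$'$, ES or CS), and then compares a run where $c_i$ withholds all further messages with one where those messages are merely delayed. Your timing case split (funding before versus after $q$) plays the same structural role as the paper's ``$c_i$ has not yet been paid'' step, and your crashed/delayed pair corresponds to their $r_1$/$r_2$. Two points to tighten: first, you should delay \emph{all} of Bob's outgoing messages past $T^\star$, not only those to $e_{n-1}$, since the model permits messaging between arbitrary automata and other recipients' behaviour could otherwise differ between $E$ and $E'$. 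Second, your claimed reduction to $n=1$ is not quite ``identical'' for $n>1$: CS3 only says a connector gets her money back, not that the refund comes from $e_{n-1}$ rather than $e_{n-2}$, so you need an explicit conservation-of-money step (all other customers even, all escrows non-negative, hence $e_{n-1}$'s balance is zero) to force $e_{n-1}$ to be the refunder. The paper's choice of the last certificate-holder sidesteps this bookkeeping; your last-hop approach is more concrete but needs that extra line for general $n$.
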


\begin{proof}
  Assume an eventually terminating cross-chain payment protocol
  with strong liveness guarantees.

    Consider a run $r$ in which all participants abide by the protocol.  It exists by property C\@.
    By property L Bob will be paid in this run, and by property T' all customers terminate.
    Since by properties ES and CS no participant
    makes a loss, Alice will not get her money back. Hence by property CS1 Alice will end up with
    the certificate $\chi$. Let $c_i$ be the last customer that holds the certificate before it
    reaches Alice. This must be either Bob or one of the connectors.
    Let $s$ be the state in which $c_i$ is about the send on $\chi$.

    The protocol may not prescribe that $c_i$ has already received the money in state $s$. For then
    customer $c_i$ could decide to keep the money as well as the certificate, while all other
    participants keep abiding by the protocol, which would violate properties T', ES or CS\@.
  
  Now consider the following two runs of the system, that are the same until state $s$.
  In run $r_1$ customer $c_i$ never lets go of the certificate, nor sends out any other message past
  state $s$,  while all other participants abide by the protocol; in run $r_2$ all participants abide by the
  protocol, but $c_i$'s message with the certificate, and all subsequent messages from
  $c_i$, experience an extreme delay.

  First assume that $c_i$ is in fact Bob.
  By property T' run $r_1$ reaches a state $s'$ in which all customers other than Bob are terminated.
  Using properties ES and CS, in this state
  Alice ends up without the certificate, and thus with the money, and all customers Chloe$_j$ play even.
  If follows that Bob never receives his money.
  Yet for all participants other than Bob, runs $r_1$ and $r_2$ are indistinguishable, so $r_2$ will
  reach a similar state.  This violates property CS2.

  Now assume customer $c_i$ is not Bob. So Bob has already issued the certificate.
  By property T' run $r_1$ reaches a state $s'$ in which all customers other than $c_i$ are terminated.
  Using properties ES and CS, in this state
  Alice as well as Bob end up with the money, and all customers Chloe$_j$ with $j\neq i$ play even.
  If follows that Chloe$_i$ looses her money.
  Yet for all participants other than Chloe$_i$, runs $r_1$ and $r_2$ are indistinguishable, and the
  delayed certificate sent by Chloe$_i$ may arrive only after the system has reached state $s'$.
  This violates property CS3.
\end{proof}

\section{Solution to a variant under partial synchrony}

\subsection{Eventually terminating cross-chain payment protocol with weak liveness guarantees}

\paragraph{Weak liveness guarantees.}

Since it is impossible to design an \textit{eventually terminating cross-chain payment protocol with
  strong liveness guarantees} under partial synchrony, we are going to define
what is an \textit{eventually terminating cross-chain payment protocol with weak
liveness guarantees}, and show that it is possible to implement such a protocol.

In view of the impossibility proof given above, the \textit{strong liveness}
condition (L) is too strong. We would like to weaken it, and replace it by a (realistic and still
desirable) property called \textit{weak liveness} such that the problem
becomes solvable. A similar situation exists in the atomic commit problem literature, for instance
\textit{weak non-triviality} defined by Guerraoui in \cite{Gue95} or condition $AC4$ for atomic
commit defined by Hadzilacos in \cite{Had90}:\vspace{-2ex}

\begin{quote}
    If all existing failures are repaired and no new failures occur for a sufficiently long period
    of time, then all processes will reach a decision.
\end{quote}{}

\paragraph{Abort certificate.}
In the synchronous solution, we used timelocks to ensure that the money will not get stuck
forever in escrow. Under partial synchrony, it is no longer pertinent to use timelocks. We need to
replace them by a safe way to unlock the funds stored in an escrow.

We therefore modify the definition of the certificate $\chi$. Instead of having a single certificate
simply signed by Bob, we have two more general certificates called \textit{commit certificate}
$\chi_c$ and \textit{abort certificate} $\chi_a$, that can never exist simultaneously.

\paragraph{New definition of the problem.} When we take into account the two previous tweaks, we reach the following definition. We highlight in italics the difference with our previous definition of a \textit{time-bounded cross-chain payment protocol}. In particular, we replace ``Bob will not issue $\chi$" by ``Bob will receive $\chi_a$", and ``Alive will receive $\chi$'' by ``Alice will receive $\chi_c$''.
\pagebreak[4]

\begin{definition}[Eventually terminating cross-chain payment protocol with weak liveness guarantees]
\label{etccppwwlg}

  A cross-chain payment protocol is an \textit{eventually terminating cross-chain payment protocol
  with weak liveness guarantees} if it satisfies the following properties:

  \begin{itemize}
  \item[C] \textbf{Consistency.} For each participant in the protocol it is possible to abide by the protocol.
  \item[CC] \textbf{Certificate consistency.} \emph{An abort and a commit certificate can never be issued both.}
  \item[T'] \textbf{Eventual termination.} Each customer that abides by the protocol
    terminates eventually, provided her escrows abide by the protocol.
  \item[ES] \textbf{Escrow security.} Each escrow that abides by the protocol does not loose money.
  \item[CS'] \textbf{Customer security.}
    \begin{itemize}
      \item[CS1'] Upon termination, if Alice and her escrow abide by the protocol, Alice  has either got her money back or received the \textit{commit certificate} $\chi_c$.
  
      \item[CS2'] Upon termination, if Bob and his escrow abide by the protocol, Bob has either received the money \textit{or the abort certificate} $\chi_a$.
      \item[CS3'] Upon termination, each connector that abides by the protocol has got her money back,
      provided her escrows abide by the protocol. 

    \end{itemize}
    \item[L'] \textbf{Weak liveness.} If all parties abide by the protocol, \textit{and if the customers wait sufficiently long before and after sending money}, then Bob is eventually paid.
\end{itemize}

\end{definition}

\subsection{Transaction manager abstraction}\label{sec:transaction manager}
\newcommand{\TM}{\mbox{\it TM}}
\newcommand{\Agreement}{Consistency}

The previous impossibility result shows that when there is no synchrony, it is hard for processes to agree on a uniform commitment decision (abort or commit). We are going to leverage the existing solutions to the classical consensus problem, and embed them in an abstraction called ``transaction manager'', defined as follows:

\begin{definition}[Transaction manager]
  \label{transaction_manager_def}

    A \textit{transaction manager}, called $\TM$, is a process that can receive binary values from the set $\{\textsc{commit}, \textsc{abort}\}$ from a customer, and that can send certificates drawn from $\{\chi_c, \chi_a\}$ to customers.
    
    A transaction manager must verify the following properties:
  
    \begin{itemize}
        \item \textbf{TM-Termination}. If a customer proposes to \textsc{abort}, or if Bob proposes
          to \textsc{commit}, then $\TM$ sends eventually a certificate $\chi_c$ or
          $\chi_a$ to every customer.
          If a customer proposes \textsc{abort} or \textsc{commit} after $\TM$
          has issued a certificate, $\TM$ will send a copy of that certificate to that customer.
        \item \textbf{TM-\Agreement}. $\TM$ does not issue two different certificates.
        \item \textbf{TM-Commit-Validity.} $\TM$ can issue $\chi_c$ only if Bob proposed \textsc{commit}.
        \item \textbf{TM-Abort-Validity.} $\TM$ can issue $\chi_a$ only if some customer proposed \textsc{abort}.
    \end{itemize}
\end{definition}

~\\There are several ways of implementing a transaction manager.

\begin{itemize}
    \item \textbf{Centralised transaction manager.} The transaction manager can be a centralised actor trusted by every customer.
    \item \textbf{Distributed transaction manager.} The transaction manager could be a collection of
      $k$ parties (``validators'') appointed by the participants in the protocol.
      These validators could run the consensus algorithm for partial synchrony from Dwork, Lynch \&
      Stockmeyer \cite{DLS88}, or any equivalent algorithm. This works when less than one third of
      these validators are unreliable. In this case, ``sending a message to the TM'' means sending
      it to each of these validators, and ``the TM sending a decision'' means strictly more than one third of the validators
      sending the jointly taken decision.
    
    \item \textbf{External decentralised transaction manager.} The transaction manager can be a
      decentralised data structure. For
      example, a smart contract running on a permissionless blockchain shared by every
      customer can be programmed to be a transaction manager.
\end{itemize}

In the following, we assume that we have such a transaction manager. Even if it is run by the
customers, we do not specify the messages exchanged to run it: the transaction manager is a
black-box embedding a consensus algorithm that is running off-protocol.
Section \ref{implementation-TM} provides an example implementation.

\subsection{A protocol responding to the problem}
\vspace{2ex}
\vfill

\begin{center}
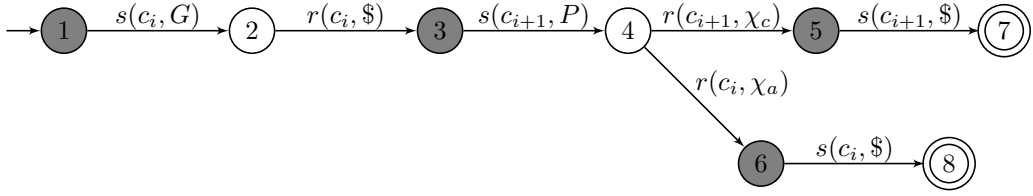

    \begin{tikzpicture}[>=latex', auto, semithick, node distance=2.5cm, initial text={}, double distance=2pt, every edge/.style={draw,->}, every state/.style={inner sep=0cm, minimum size=0.6cm}]

        \node[state, initial, fill=gray]   (A)                     {1};
        \node[state]            (B) [right of=A]        {2};
        \node[state, fill=gray] (C) [right of=B]        {3};
        \node[state]            (D) [right of=C]        {4};
        \node[state, fill=gray] (E) [right of=D]  {5};
        \node[state, fill=gray] (F) [below right of=D]  {6};
        \node[state, accepting] (G) [right of=E]        {7};
        \node[state, accepting]            (H) [right of=F]        {8};

        \path   (A) edge        node {$s(c_i, G)$}          (B)
                (B) edge        node {$r(c_i, \$)$}      (C)
                (C) edge        node {$s(c_{i+1}, P)$}         (D)
                (D) edge        node {$r(c_{i+1}, \chi_c)$}     (E)        
                (D) edge        node {$r(c_i, \chi_a)$}     (F)
                (E) edge        node {$s(c_{i+1}, \$)$}          (G)        
                (F) edge        node {$s(c_i, \$)$}          (H)            ;

      \end{tikzpicture}
      \captionof{figure}{Automaton for escrow $e_i$}

    \end{center}
\vfill

\begin{center}
    \begin{tikzpicture}[>=latex', auto, semithick, node distance=2.5cm, initial text={}, every edge/.style={draw,->}, double distance=2pt, every state/.style={inner sep=0cm, minimum size=0.6cm}]

        \node[state, initial]   (A)                     {1};
        \node[state]            (B) [right of=A]        {2};
        \node[state, accepting] (M) [below right of=B]  {3};
        \node[state, fill=gray] (C) [right of=B]        {4};
        \node[state]            (D) [right of=C]        {5};
        \node[state, fill=gray] (E) [above right of=D]  {6};
        \node[state, fill=gray] (F) [below right of=D]  {7};
        \node[state, fill=gray] (G) [right of=D]        {8};
        \node[state]            (H) [right of=G]        {9};
        \node[state]            (I) [right of=E]        {10};
        \node[state, accepting] (J) [right of=I]        {11};
        \node[state]            (K) [right of=F]        {12};
        \node[state, accepting] (L) [right of=K]        {13};

        \path   (A) edge        node {$r(e_i, G)$}          (B)
                (B) edge        node {$r(e_{i-1}, P)$}      (C)
                (C) edge        node {$s(e_i, \$)$}         (D)
                (D) edge        node {$r(\TM, \chi_c)$}     (E)        
                (D) edge        node {$r(\TM, \chi_a)$}     (F)
                (D) edge        node {$now \geq T_i$}          (G)        
                (G) edge        node {$s(\TM, \textsc{ab.})$}          (H)        
                (H) edge[above] node {$\qquad\;\;\quad r(\TM, \chi_c)$}     (E)        
                (H) edge[above] node {$\hspace{80pt} r(\TM, \chi_a)$}     (F)
                (E) edge        node {$s(e_{i-1}, \chi_c)$} (I)    
                (I) edge        node {$r(e_{i-1}, \$)$}     (J)        
                (F) edge        node {$s(e_i, \chi_a)$}     (K)        
                (K) edge        node {$r(e_i, \$)$}         (L)        
                (B) edge        node {$now \geq T_i$}          (M)        

        ;

      \end{tikzpicture}
    \end{center}
\begingroup

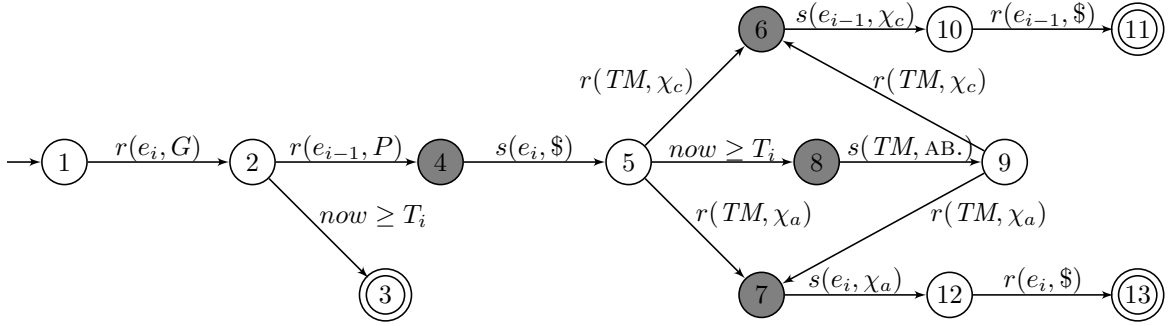
\captionof{figure}{Automaton for customer $c_i, i \in \{1..n{-}1\}$}
\endgroup
\vfill

\begin{center}
    \begin{tikzpicture}[>=latex', auto, semithick, node distance=2.5cm, initial text={}, every edge/.style={draw,->}, double distance=2pt, every state/.style={inner sep=0cm, minimum size=0.6cm}]

        \node[state, initial]   (A)                     {1};
        \node[state, fill]      (C) [right of=A]        {4};
        \node[state]            (D) [right of=C]        {5};
        \node[state, accepting] (E) [above right of=D]  {6};
        \node[state, fill=gray] (F) [below right of=D]  {7};
        \node[state, fill=gray] (G) [right of=D]        {8};
        \node[state]            (H) [right of=G]        {9};
        \node[state]            (K) [right of=F]        {12};
        \node[state, accepting] (L) [right of=K]        {13};

        \path   (A) edge        node {$r(e_0, G)$}          (C)
                (C) edge        node {$s(e_0, \$)$}         (D)
                (D) edge        node {$r(\TM, \chi_c)$}     (E)        
                (D) edge        node {$r(\TM, \chi_a)$}     (F)
                (D) edge        node {$now \geq T_0$}          (G)        
                (G) edge        node {$s(\TM, \textsc{ab.})$}          (H)        
                (H) edge[above] node {$\qquad\;\;\quad r(\TM, \chi_c)$}     (E)        
                (H) edge[above] node {$\hspace{80pt} r(\TM, \chi_a)$}     (F)
                (F) edge        node {$s(e_0, \chi_a)$}     (K)        
                (K) edge        node {$r(e_0, \$)$}         (L)        
        ;
      \end{tikzpicture}
    \end{center}
\begingroup

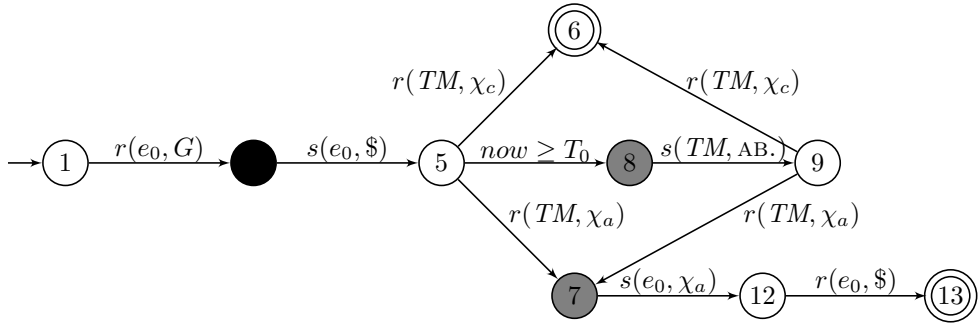
\captionof{figure}{Automaton for customer $c_0$ (Alice)}
\endgroup

\begin{center}
    \begin{tikzpicture}[>=latex', auto, semithick, node distance=2.5cm, initial text={}, every edge/.style={draw,->}, double distance=2pt, every state/.style={inner sep=0cm, minimum size=0.6cm}]

        \node[state, initial]   (A)                     {1};
        \node[state, fill=gray] (C) [right of=A]        {3};

        \node[state]            (D) [right of=C]        {5};
        \node[state, fill=gray] (E) [above right of=D]  {6};
        \node[state, accepting] (F) [below right of=D]  {7};
        \node[state]            (I) [right of=E]        {10};
        \node[state, accepting] (J) [right of=I]        {11};
        \node[state, fill=gray] (M) [below right of=A]  {2};
        \node[state]            (N) [right of=M]        {4};

        \path   (A) edge        node {$r(e_{n-1}, P)$}          (C)
                (C) edge        node {$s(\TM, \textsc{com.})$}         (D)
                (D) edge        node {$r(\TM, \chi_c)$}     (E)        
                (D) edge        node {$r(\TM, \chi_a)$}     (F)
                (E) edge        node {$s(e_{n-1}, \chi_c)$} (I)    
                (I) edge        node {$r(e_{n-1}, \$)$}     (J)
                (A) edge        node {$now \geq T_n$}       (M)        
                (M) edge        node {$s(\TM, \textsc{ab.})$}         (N)
                (N) edge        node {$r(\TM, \chi_a)$}     (F)
  ;      
      \end{tikzpicture}
    \end{center}
\begingroup

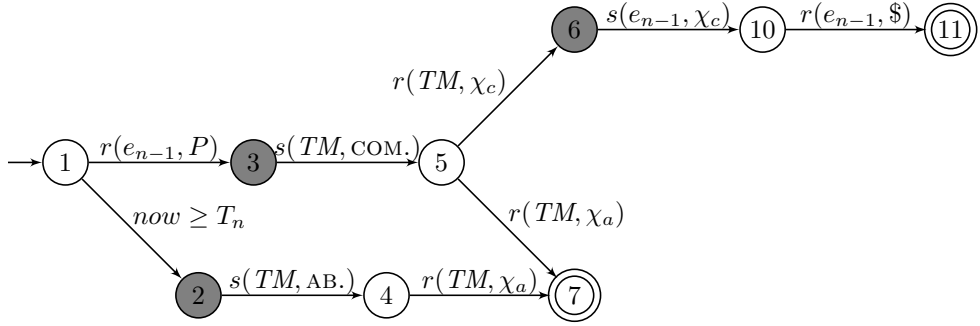
\captionof{figure}{Automaton for customer $c_n$ (Bob)}
\endgroup

\paragraph{Description of the protocol.} In this version of the protocol Chloe$_i$ awaits the two promises of her escrows, just like in
the protocol of Figure~\ref{automata}, and then sends the money to her downstream escrow.
Subsequently she awaits an abort or commit certificate from the transaction manager.
If she gets a commit certificate, she cashes it in at her upstream escrow to obtain the money.
If she gets an abort certificate instead, she cashes it in at her downstream escrow for a refund of
the money she paid earlier. In case she looses patience before she gets the second promise, which
happens at a time $T_i$ specific for Chloe$_i$, she simply quits. In case she looses patience after
she has invested the money but before she gets any certificate, the timeout transition occurs, and
she sends an abort proposal to the transaction manager. The latter replies on this with either an
abort or a commit certificate, and she cashes those in as above.

The tags $G$ and $P$ can thus be understood as promises that would tell:

\begin{itemize}
    \item $G$: ``I guarantee that if I receive \$ from you, then if you send me $\chi_a$ I will send you \$".
    \item $P$. ``I promise that if you send me $\chi_c$ I will send \$ to you".
\end{itemize}
  The automaton for Alice is just a simplified version of the one for Chloe, and the one for the
  escrows is trivial.
  For Bob, the important modification is that he alerts the TM with a {\sc commit} message when the
  protocol is ready for this. Moreover, in case Bob looses patience before receiving any promise, he
  send a {\sc abort} message to the TM, so that he receives the abort certificate in response.

\subsection{Proof of correctness}

Let us call $P$ the protocol defined by the above ANTA.
Section \ref{initialisation} (Initialisation) applies to $P$ as well, and we assume that the
  appropriate modifications are made.

\begin{theorem}
  Protocol $P$ is an eventually terminating cross-chain payment protocol with weak liveness guarantees.
\end{theorem}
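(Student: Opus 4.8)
The plan is to verify, property by property, that $P$ meets the seven conditions of Definition~\ref{etccppwwlg}, treating the transaction manager of Definition~\ref{transaction_manager_def} as a black box whose only relevant features are its four axioms, and assuming throughout the initialisation fix of Section~\ref{initialisation} so that no connector ever receives a $P$-promise before the matching $G$-promise. Three of the conditions are immediate structural checks. Consistency~(C): the grey output states carry time-out $\epsilon$, so by bounded reaction speed every prescribed emission (money, a promise, an abort or commit proposal) is feasible in time, and the only resource not trivially at hand --- a commit or abort certificate --- is, in every automaton, received from the TM on an incoming transition strictly before it is ever forwarded. Certificate consistency~(CC) is exactly TM-Consistency (the TM never issues two distinct certificates) together with $\chi_c\neq\chi_a$. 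Escrow security~(ES): in the escrow automaton the receive $r(c_i,\$)$ (state~2) precedes every money-emitting transition, and from state~4 the automaton commits irrevocably to exactly one of the branches $5\to 7$ (pay $c_{i+1}$) and $6\to 8$ (refund $c_i$), so it emits money at most once after having received it.

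The core is eventual termination~(T'), on which customer security rests, and the step I would isolate first is a \emph{no-deadlock lemma}: if a customer $c_i$ abiding by the protocol reaches a state in which she must forward a certificate to an escrow $e$ of hers, and $e$ abides, then $e$ eventually performs the matching receive and releases the money. Its proof uses two facts. First, $c_i$ reaches such a state only after having received the relevant $G$- or $P$-promise \emph{from} $e$; since $e$ abides, that promise was sent only after $e$ had received the corresponding payment, so $e$ is in its state~4 or beyond. Second, a customer forwards $\chi_c$ only after getting it from the TM, whereupon TM-Consistency and the unforgeability of certificates rule out $\chi_a$ ever having been issued, so $e$ cannot have been diverted into its refund branch (and symmetrically for $\chi_a$); hence $e$ sits in state~4 with exactly the desired receive enabled, takes it, and pays within $\epsilon$. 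Granting the lemma, T' follows by walking each automaton: a connector waiting for her second promise escapes through her $\now\ge T_i$ time-out into accepting state~3; once she has invested the money she either receives a certificate or times out and proposes abort, and TM-Termination then delivers a certificate to her; in either certificate branch the lemma carries her to an accepting state. Alice and Bob are the degenerate cases; in particular Bob's $\now\ge T_n$ time-out lets him solicit $\chi_a$ even when $P$ never arrives.

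Customer security then reduces to reading off the accepting states reached under~T'. For CS1', Alice halts in state~6 holding $\chi_c$, or in state~13 after being refunded by $e_0$ (the refund guaranteed by the lemma, since the $\chi_a$ she cashes in precludes $\chi_c$). For CS2', Bob halts in state~11 with the money, or in state~7 with $\chi_a$; his obtaining $\chi_a$ after having proposed commit is harmless. For CS3', a connector halts in state~3 (never having paid), state~11 (refunded by $e_{i-1}$), or state~13 (refunded by $e_i$), her balance restored in each case; none of this uses anything about the other customers beyond what the TM axioms encapsulate. For weak liveness~(L'): assuming all parties abide and --- as the hypothesis of~L' permits --- that every patience threshold $T_i$ is large enough (and Alice starts late enough, once communication has become timely) that no time-out transition fires, the payment propagates $c_0\to e_0\to\cdots\to e_{n-1}$, delivering $P$ to Bob, who proposes commit; since no customer proposes abort, TM-Abort-Validity forbids $\chi_a$, while TM-Termination (Bob proposed commit) forces a certificate and TM-Consistency makes it unique, so the TM issues $\chi_c$ and, again by TM-Termination, delivers it to every customer; Bob forwards $\chi_c$ to $e_{n-1}$, which by the lemma pays him, so Bob is eventually paid.

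The main obstacle is the no-deadlock lemma: stating and proving it correctly, within the ANTA semantics and against arbitrary behaviour of the remaining customers, so that a forwarded certificate always meets an escrow positioned exactly to consume it and not already pushed onto the wrong branch. This is the one place where the message-ordering built into the automata, TM-Consistency, and certificate unforgeability must all be invoked at once; the remaining verifications are routine bookkeeping.
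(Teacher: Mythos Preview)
Your decomposition tracks the paper's almost exactly: it too splits the theorem into six lemmas (Consistency, Certificate consistency, Eventual termination, Escrow security, Customer security, Weak liveness), and your treatment of C, CC, ES, CS' and L' is essentially identical. Repackaging the core of the termination argument as a single ``no-deadlock lemma'' covering what the paper treats as states~10 and~12 is a reasonable reformulation, not a different approach.

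There is, however, a genuine slip in your justification of that lemma. You claim that the relevant promise ``was sent only after $e$ had received the corresponding payment, so $e$ is in its state~4 or beyond.'' This is true of the $P$-promise --- $e_i$ sends $P$ in its state~3, after receiving $\$$ in state~2 --- so it correctly handles a customer forwarding $\chi_c$ upstream. But it is \emph{false} of the $G$-promise: the escrow sends $G$ in its very first transition, \emph{before} receiving any money. Hence when $c_i$ forwards $\chi_a$ to her downstream escrow $e_i$, the mere receipt of $G$ tells you nothing about $e_i$ being past state~2. The paper closes this case differently: $c_i$ reaches the $\chi_a$-forwarding state only after she herself has \emph{sent} $\$$ to $e_i$ (her transition $4\to 5$); since $e_i$ abides, it receives this, passes through states~2--3--4, and only then is positioned to accept the certificate. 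Your lemma is salvageable with this corrected reasoning, but as written one of its two cases does not go through. A smaller omission: you never argue that a connector leaves state~1 (the wait for $G$); the paper dispatches this by noting that her abiding downstream escrow sends $G$ immediately.
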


\begin{proof}
  The properties to prove are close to the time-bounded cross-chain payment protocol, and the proof is similar. We split the proof in the following lemmas: Lemma \ref{consistency2}, \ref{cc}, \ref{termination2}, \ref{ls2}, \ref{cs2} and \ref{progress2}.
\end{proof}

\begin{lemma}[Consistency]
  \label{consistency2}
  For each participant in $P$ it is possible to abide by $P$.
\end{lemma}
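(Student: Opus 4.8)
The statement to prove is Lemma~\ref{consistency2}: each participant can abide by protocol $P$. Following the structure of the consistency argument for the time-bounded protocol (Section~5.4, paragraph ``Consistency''), the proof is a case analysis over the four automata --- escrow $e_i$, Chloe$_i$, Alice, Bob --- checking that at every state the prescribed outgoing transition(s) can actually be performed. There are two kinds of obligations to discharge: (1) \emph{timing feasibility}, that whenever the automaton is in an output state with a finite time-out value, the required message can be sent within that bound; and (2) \emph{resource feasibility}, that the automaton never is asked to forward a resource (money \$, or a certificate $\chi_c$ or $\chi_a$) that it does not already possess.

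**Key steps in order.** First I would dispatch the timing obligations: every grey output state carries time-out value $\epsilon$, and by the bounded-reaction-speed assumption of Section~\ref{reaction speed} any response can be computed and transmitted within $\epsilon$, so the constraint is always meetable; the black output state (Alice's, and Bob's initial waiting before the $\mathit{now}\ge T_n$ transition, etc.) has time-out $\infty$ and imposes nothing. This covers transitions such as $s(e_i,\$)$, $s(\TM,\textsc{com.})$, $s(\TM,\textsc{ab.})$, and the various $s(\cdot,\chi_c)$, $s(\cdot,\chi_a)$ forwarding steps from the purely timing point of view. Second, I would handle the resource obligations. Money: an escrow sends \$ only from states reached after $r(c_i,\$)$ (state 3 onward), so it has received the money before it must refund or forward it; each Chloe$_i$ and Alice send money to a downstream escrow only after receiving the two promises $G,P$, and they are assumed (as in Figure~\ref{automata}) to have that money available; the refund paths $s(e_i,\$)$ out of escrow state~8 and the receipts $r(e_{i-1},\$)$ by customers are matched by the escrow having been paid. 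Commit certificate $\chi_c$: the only customers who \emph{send} $\chi_c$ are Chloe$_i$ (from state~6, edge $s(e_{i-1},\chi_c)$) and Bob (from state~6), and in each case the sending transition is preceded by a receive transition $r(\TM,\chi_c)$ (directly, or via state~9 after an abort proposal), so $\chi_c$ is in hand; Alice never forwards $\chi_c$. Abort certificate $\chi_a$: customers send $\chi_a$ only to an escrow (Chloe$_i$ from state~7, Alice likewise, escrows from state~6 via $s(e_i,\chi_a)$), always after $r(\TM,\chi_a)$ or $r(c_i,\chi_a)$; so again it is possessed before being sent. Third, I would check the one genuinely new point relative to the synchronous protocol: the transaction manager must actually deliver what the automata expect. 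The customers' automata assume that after $s(\TM,\textsc{ab.})$ (state~8$\to$9, resp.\ Alice's state~8$\to$9, resp.\ Bob's state~2$\to$4) the $\TM$ replies with \emph{some} certificate; this is exactly guaranteed by \textbf{TM-Termination} (``if a customer proposes \textsc{abort} \dots\ then $\TM$ sends eventually a certificate''). Similarly Bob's commit path relies on $\TM$ eventually answering his \textsc{commit}; again TM-Termination. So the $\TM$ interface provides precisely the enabling events the input states of the customer automata wait on, and no customer following the protocol can deadlock at a $\TM$-interaction state.

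**Main obstacle.** The routine parts are the timing and money bookkeeping, which mirror the earlier proof almost verbatim. The one place that needs care is arguing that a customer who takes the timeout branch ($\mathit{now}\ge T_i$) at state~5 and then sends $\textsc{ab.}$ to $\TM$ is guaranteed to be unblocked afterwards --- this must lean on TM-Termination, and implicitly on the fact that $\TM$ cannot have already committed-and-then-vanished, which TM-Termination's second clause (``$\TM$ will send a copy of that certificate'') also covers. I would also note, as in the earlier proof, that consistency here is about \emph{possibility} of abiding, not about all runs succeeding, so no synchrony assumption is needed; asynchrony suffices because every message sent eventually arrives and the $\TM$ eventually responds. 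Concluding, I would remark that a symmetric inspection of Alice's and Bob's automata (both being simplifications of Chloe's, plus Bob's extra $\textsc{com.}$ edge) leaves no further case, establishing the lemma.
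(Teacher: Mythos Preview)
Your approach is essentially the same as the paper's: verify that at each output (grey/black) state the prescribed send can actually be performed, by checking timing feasibility and resource availability, with the key point being that every send of a certificate is preceded by a receipt of that certificate.

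Two remarks. First, your ``third key step'' about \textbf{TM-Termination} is unnecessary here and conflates consistency with termination. Consistency only requires that when the automaton is in an \emph{output} state it can perform the mandated send; an automaton sitting in an \emph{input} state waiting for a message from $\TM$ is perfectly abiding by the protocol even if that message never comes. The paper's proof of this lemma accordingly says nothing about $\TM$ responding; the argument that state~9 (and Bob's state~4) is eventually left belongs to Lemma~\ref{termination2} (Eventual termination), where the paper does invoke \textbf{TM-Termination}. Second, a small factual slip: escrows never send $\chi_a$ (or $\chi_c$); from states~5 and~6 they send only~$\$$. So your clause ``escrows from state~6 via $s(e_i,\chi_a)$'' should be dropped.
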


\begin{proof}
  We have to ensure that each participant will be able to follow the transitions after a grey or black state. The only way this would not be possible would be when the protocol asks to transmit a resource that is not available. It is always possible to send tags and money, but we need to verify that any sending of a certificate is preceded by the receipt of this certificate---except for the issuer of the certificate. Such a property is clear after inspection of the automata of escrows and customers.
\end{proof}

\begin{lemma}[Certificate consistency]
  \label{cc}
An abort and a commit certificate can never be issued both.
\end{lemma}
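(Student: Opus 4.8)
The plan is to reduce property CC to the \textbf{TM-Consistency} guarantee of the transaction manager, which states that $\TM$ never issues two different certificates. The only thing that has to be established first is that, in protocol $P$, the transaction manager is the \emph{sole} source of certificates, so that ``an abort and a commit certificate were both issued'' genuinely forces $\TM$ to have issued both.

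For this I would inspect the four automata of this section (escrow $e_i$, Chloe$_i$, Alice, and Bob). Every transition at a \emph{customer} that mentions $\chi_c$ or $\chi_a$ is either a receive transition $r(\TM,\chi_c)$ or $r(\TM,\chi_a)$ from the transaction manager, or a send transition $s(e_j,\chi_c)$ or $s(e_j,\chi_a)$ that the automaton can reach only \emph{after} first performing the matching receive from $\TM$; the escrow automaton only ever receives certificates and never emits one. So no participant following $P$ ever fabricates a certificate, and any certificate occurring in a run is a copy of one sent by $\TM$. Together with the authentication assumptions of Section~\ref{correctness} --- certificates (threshold signatures, smart-contract outputs, and the like) are unforgeable and replay-protected --- this extends to Byzantine customers: a deviating customer still cannot produce a $\chi_c$ or $\chi_a$ that $\TM$ did not issue.

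Given this, if in some run of $P$ both an abort certificate and a commit certificate exist, then $\TM$ must have issued both $\chi_a$ and $\chi_c$; since $\chi_a\neq\chi_c$, this contradicts TM-Consistency. Hence no such run exists, which is exactly property CC.

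I do not expect a real obstacle: the content of this lemma was deliberately packaged into the transaction-manager abstraction, so the argument is in essence a direct appeal to TM-Consistency. The only step needing a little care is the automata inspection showing that certificates originate only at $\TM$ --- routine, but what makes the reduction legitimate once Byzantine customers are in play.
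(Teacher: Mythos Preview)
Your proposal is correct and takes essentially the same approach as the paper: the paper's proof is the single sentence ``This is a consequence of Property \textbf{TM-Consistency} of Definition~\ref{transaction_manager_def}.'' Your additional automata inspection and unforgeability argument are not wrong, but the paper treats the fact that only $\TM$ issues certificates as definitional rather than something requiring verification.
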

\begin{proof}
This is a consequence of Property \textbf{TM-Consistency} of Definition~\ref{transaction_manager_def}.
\end{proof}

\begin{lemma}[Eventual termination]
  \label{termination2}
  Each customer that abides by $P$ will terminate eventually, provided her escrows abide by the protocol. 
\end{lemma}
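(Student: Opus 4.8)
The plan is a state-by-state case analysis on the customer $c_i$ under consideration --- Alice, Bob, or a connector Chloe$_i$ --- showing that from every state she leaves it after a finite delay and along an edge consistent with the acyclic orientation of her automaton; since each automaton is finite and acyclic, only finitely many such steps occur before she reaches a termination state, and the total elapsed time is finite. The output states are immediate: a grey output state is left within time $\epsilon$ by the bounded reaction speed assumption, and the single black output state (Alice's state~4, of time-out $\infty$) is left after a finite, if unbounded, delay, because the time spent in an output state is strictly less than its time-out value. So only the input states need argument, and I would sort them by what must happen before the customer can leave.

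The first input state of each customer waits for the promise $G$ from her downstream escrow; that escrow abides by the protocol, so it emits $G$ within $\epsilon$ of starting, and the message is eventually delivered since the network loses nothing. The input state entered right after a customer proposes \textsc{abort} to the transaction manager --- state~9 of Alice and of each Chloe$_i$, and state~4 of Bob --- is handled by TM-Termination: after an \textsc{abort} proposal, $\TM$ eventually sends some certificate to every customer. For Bob's state~4, whose only outgoing edge is labelled $r(\TM,\chi_a)$, I additionally use TM-Commit-Validity: Bob reached it via the time-out branch $\now\ge T_n$, not the \textsc{commit} branch, so he never proposed \textsc{commit}, hence $\TM$ cannot have issued $\chi_c$, so the certificate it issues is $\chi_a$. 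Bob's state~5, reached after he has proposed \textsc{commit}, is covered directly by TM-Termination.

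The crux is the family of states in which the customer waits for an escrow to honour a certificate she has just forwarded to it: states~10 and~12 of Alice and of each Chloe$_i$, and state~10 of Bob. Consider state~10 (the commit branch): the customer reaches it only after receiving $\chi_c$ from $\TM$, so $\chi_c$ has been issued, so by Certificate Consistency (Lemma~\ref{cc}) $\chi_a$ never is, so the escrow's other customer never forwards $\chi_a$ to it, and the escrow therefore cannot have left its ``await a certificate'' state along its abort edge. Moreover the customer obtained her promise $P$ only after the escrow forwarded it, so the escrow has already reached that ``await a certificate'' state; hence it receives $\chi_c$, takes its grey outgoing transition within $\epsilon$, sends the money, and the money reaches the customer after a finite delay. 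State~12 is symmetric, with $\chi_c$ and $\chi_a$ interchanged and, for a connector, the other of her two escrows playing the role.

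Finally, the remaining input states --- the second state of each customer and the ``await a certificate'' states~5 of Alice and of the connectors --- each carry, besides their message edges, a time-out edge $\now\ge T_j$, which becomes enabled once the clock passes $T_j$; so the customer leaves even if no message ever arrives. Collecting the cases and invoking acyclicity of each automaton yields the lemma. I expect the escrow-honouring step of the previous paragraph to be the main obstacle: it is where eventual termination genuinely relies on the interplay between Certificate Consistency and the escrow automaton --- in particular on excluding that the escrow has already raced down the ``wrong'' branch by the time the customer's certificate arrives --- and it must be spelled out for both branches and, for connectors, for the two escrows separately from the simpler cases of Alice and Bob.
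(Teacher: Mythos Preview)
Your approach matches the paper's: a state-by-state case analysis showing each input state is eventually left, using the time-out edges for states~2 and~5, \textbf{TM-Termination} (plus \textbf{TM-Commit-Validity} for Bob's state~4) for the post-proposal states, and Certificate Consistency (Lemma~\ref{cc}) to rule out the escrow having taken the wrong branch in states~10 and~12.

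One small slip worth flagging: state~12 is not quite symmetric to state~10. In state~10 the upstream escrow is known to be in its waiting state~4 because it has already sent promise~$P$; but in state~12 the downstream escrow, after sending promise~$G$, is only in its state~2. You need the additional observation that the customer's own state~4 (sending~$\$$) precedes state~12, so the escrow will eventually receive the money, pass through state~3, and reach state~4---only then does the Certificate-Consistency argument bite. The paper makes exactly this extra step explicit, and your final paragraph already anticipates that both branches must be spelled out separately. (Two cosmetic points: Alice has no state~10, and Bob's initial state~1 is covered by its time-out edge rather than by receiving~$G$.)
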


\begin{proof} Thanks to Lemma \ref{consistency2}, in order to show that a terminating
  state will be reached, it is sufficient to prove that every input state will be left eventually.
  We thus establish Lemma~\ref{termination2} by showing that Chloe$_i$ and Alice cannot get
  stuck in states 1, 2, 5, 9, 10 and 12, and Bob cannot get stuck in states 1, 4, 5 and 10.

  Since her downstream escrow will surely leave state 1, and thus send the message $G$, it follows that
  Chloe$_i$, and Alice, will receive this message, and thereby leave state 1.

  Chloe$_i$ will leave state $2$ at time $T_i$ at the latest, or immediately when
  reaching this state after time $T_i$. By the same reasoning, Bob will not get stuck in state 1.

  Chloe$_i$ and Alice will leave state $5$ at time $T_i$ at the latest, or immediately when
    reaching this state after time $T_i$. She will send an {\sc abort} message to $\TM$ to
    reach state $9$, and by the property \textbf{TM-Termination}
    of Definition~\ref{transaction_manager_def}, $\TM$ will eventually reply to her with $\chi_a$ or $\chi_a$.
    Hence she will leave state 9.
    By the same reasoning, Bob will not get stuck in state 5.

  To reach state 4, Bob sends an {\sc abort} message to $\TM$. By properties \textbf{TM-Termination}
    and \textbf{TM-Commit-Validity} of Definition~\ref{transaction_manager_def},  $\TM$ will
    eventually reply to him with $\chi_a$. Hence he will leave state~4.

      Now assume Customer $c_{i+1}$ (Chloe or Bob) reaches state 10. Then Customer $c_{i+1}$ has
      already received promise $P$ from Escrow $i$, and thus Escrow $i$ must have send this promise,
      thereby reaching state 4. To reach state 10,  Customer $c_{i+1}$ sends certificate $\chi_c$ to
      Escrow $i$. By Lemma~\ref{cc}, the TM never issues certificate $\chi_a$, so Customer $c_i$
      cannot send it to Escrow $i$. It follows that Escrow $i$ will reach state 5,
      and send the money to Customer $c_{i+1}$. Hence Customer $c_{i+1}$ will leave state 10
      and reach the terminating state $11$.

      Finally assume Customer $c_{i}$ (Alice or Chloe) reaches state 12.  Then Customer $c_{i}$ has
      already received promise $G$ from Escrow $i$, and thus Escrow $i$ must have send this promise,
      thereby reaching state 2. After receiving promise $G$, Customer $c_{i}$ has send the money to
      Escrow $i$, so Escrow $i$ will have reached state 3, and hence also state 4.
      To reach state 12,  Customer $c_{i}$ sends certificate $\chi_a$ to
      Escrow $i$. By Lemma~\ref{cc}, the TM never issues certificate $\chi_c$, so Customer $c_{i+1}$
      cannot send it to Escrow $i$. It follows that Escrow $i$ will reach state 6,
      and send the money to Customer $c_{i}$. Hence Customer $c_{i}$ will leave state 12
      and reach the terminating state $13$.
\end{proof}

\begin{lemma}[Escrow-security]
  \label{ls2}
  Each escrow that abides by $P$ will not loose money.
\end{lemma}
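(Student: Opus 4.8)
The plan is to argue purely from the shape of the automaton for escrow $e_i$, exactly in the spirit of the escrow-security paragraph for the time-bounded protocol: an escrow never pays out more than it has taken in. First I would identify the single transition by which the escrow acquires value, namely the input transition $r(c_i,\$)$ leading from state~2 into state~3, and the only two transitions by which it sends value away, namely $s(c_{i+1},\$)$ out of state~5 and $s(c_i,\$)$ out of state~6.

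Next I would observe that both of these output transitions can be taken only after the escrow has been in state~4, which in turn is reachable only from state~3, which is entered precisely by the money-receiving transition $r(c_i,\$)$; hence the escrow spends money only after it has received it. Moreover, states~5 and~6 lie on the two disjoint branches leaving the input state~4, the escrow takes at most one transition out of state~4, and states~7 and~8 --- the respective successors --- are termination states with no outgoing transitions. Consequently the escrow performs at most one value-sending transition in any run. Combining these two facts, the total value paid out by the escrow is at most the value it received (with equality on the refund branch through states~6--8, and strictly less on the forward branch through states~5--7 once Chloe's fee is accounted for), so an escrow abiding by $P$ never ends up out of pocket.

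The only point needing a word of care is that, unlike the customer-security properties, requirement ES must hold against arbitrary --- including Byzantine --- behaviour of all the customers. The argument above is robust to this because it appeals solely to the escrow's own automaton and to the order of its transitions; the behaviour of $c_i$ and $c_{i+1}$ can at most determine whether and when the escrow leaves a given input state, never cause it to emit value it has not received. The single external assumption invoked is the unforgeability of certificates provided by the idealised PKI and the transaction manager, which keeps a malicious customer from fabricating a $\chi_c$ or $\chi_a$ and thereby coaxing the escrow out of state~4 prematurely; and since state~4 is in any case reached only after $r(c_i,\$)$, even this is not strictly required for the balance argument. I expect no genuine obstacle: the lemma is a direct structural inspection of the escrow automaton, and the only mildly delicate task is phrasing the ``at most one send, and only after the receive'' observation so that it is manifestly independent of how the customers behave.
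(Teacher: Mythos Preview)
Your proposal is correct and takes essentially the same approach as the paper: a structural inspection of the escrow automaton showing that every money-sending transition is preceded by the money-receiving transition $r(c_i,\$)$, so the escrow's balance never goes negative. The paper's proof is simply a one-line version of your more detailed argument; your additional remarks about at-most-one-send and robustness to Byzantine customers are sound elaborations, though the parenthetical about Chloe's fee on the forward branch is a slight misattribution (the fee accrues to the connector, not the escrow) and in any case unnecessary for the lemma.
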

\begin{proof}
  The result is immediate: any transition where an escrow sends money has been preceded by a
  transition where it receives the money. An escrow's balance cannot become negative if it follows the protocol.
\end{proof}

\begin{lemma}[Customer-security]
  \label{cs2}
  ~    \vspace{-3pt}
  \begin{enumerate}
    \item Upon termination, if Alice and her escrow abide by $P$, Alice has either got her
      money back or received the commit certificate $\chi_c$.
    \vspace{-5pt}
    \item Upon termination, if Bob and his escrow abide by $P$, Bob has either received the
      money or the abort certificate $\chi_a$.
    \vspace{-5pt}
    \item Upon termination, each connector that abides by $P$ has got her money back,
      provided her escrows abide by $P$.
  \end{enumerate}
\end{lemma}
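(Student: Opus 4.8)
The plan is to reduce everything to Lemma~\ref{termination2}: under the stated hypotheses each customer abiding by $P$ eventually halts in one of its termination states, so it suffices to enumerate those states and read off, from the path entering each, what the customer holds when it stops. From the automata these states are $6$ and $13$ for Alice, $7$ and $11$ for Bob, and $3$, $11$ and $13$ for a connector Chloe$_i$, and walking through them is essentially all the argument requires.

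First I would dispose of Alice (item~1): her state $6$ is entered only via an $r(\TM,\chi_c)$ transition, so there she holds the commit certificate $\chi_c$; her state $13$ is entered via $r(e_0,\$)$, so there she has received $\$$ back from $e_0$, having earlier sent it via $s(e_0,\$)$. Next Bob (item~2): his state $11$ is entered via $r(e_{n-1},\$)$, so there he has received the money; his state $7$ is entered via $r(\TM,\chi_a)$, so there he holds the abort certificate $\chi_a$. Finally Chloe$_i$ (item~3): her state $3$ is entered by the time-out out of state $2$, whereas the paying transition $s(e_i,\$)$ occurs only along the branch $2\to 4\to 5$, so in state $3$ she has parted with nothing; her state $13$ is entered via $r(e_i,\$)$, a refund from her downstream escrow of the $\$$ she sent it; and her state $11$ is entered via $r(e_{i-1},\$)$, a payment from her upstream escrow. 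In each termination state she ends up whole.

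The closest thing to an obstacle---and really just a point to be careful about---is making sure the customers reach those termination states at all, rather than blocking forever at an intermediate input state (Alice at state~$12$, Bob at state~$10$, Chloe$_i$ at state~$10$ or~$12$) while waiting on an escrow that has slipped down its other branch; but this is precisely what the proof of Lemma~\ref{termination2} already settled, using through Lemma~\ref{cc} that the escrow in question is still in its state~$4$ and can hence accept the presented certificate and respond, so I would simply invoke it. A second small point: ``got her money back'' for a connector finishing in state~$11$ must be read, exactly as in the CS3 argument for the time-bounded protocol, as ``made whole''---the $\$$ she collects from $e_{i-1}$ offsets the $\$$ she forwarded to $e_i$, the fee and any currency conversion being orthogonal as discussed earlier. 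Once Lemma~\ref{termination2} is in hand the three items follow directly from the shape of the customer automata, and I expect no step to require more than reading them off together with the termination lemma.
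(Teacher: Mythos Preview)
Your proposal is correct and matches the paper's own proof: both arguments simply enumerate the termination states of each customer's automaton and read off from the incoming transition what the customer holds there. The one superfluous element in your plan is the appeal to Lemma~\ref{termination2}: since the statement is explicitly conditioned on ``upon termination'', you need only inspect the termination states, not argue that they are reached---the paper's proof accordingly never invokes the termination lemma here.
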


\begin{proof}
  ~  \vspace{-5pt}
  \begin{enumerate}
  \item
    If Alice terminates in state $13$, she has got her money back in the last
    transition. If she terminates in state $6$, she has got the certificate $\chi_c$ in the
    last transition.
    \vspace{-5pt}
    \item The result is similar for Bob: if he terminates in state $7$, he has received
      $\chi_a$. Otherwise, he terminates in state $11$ and has been paid correctly.
    \vspace{-5pt}
  \item
    To reach termination, Chloe$_i$, $i \in \{1..n-1\}$ has either never spend the money
    (termination state $3$), or received $\$$ either from 
    $e_{i-1}$ (termination state $11$) or from $e_i$ (termination state $13$). In both these cases, she
    has got her money back.\qedhere
  \end{enumerate}
\end{proof}

\begin{lemma}[Weak liveness]
  \label{progress2}
  If all participants abide by $P$, and if the customers wait sufficiently long before and after sending money, then Bob will be paid.
\end{lemma}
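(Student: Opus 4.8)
The plan is to describe the single successful run of $P$---the analogue for this protocol of the run displayed in Section~\ref{run}---and to check that, once the patience thresholds $T_i$ are large enough, this run is in fact followed and ends with Bob holding the money, because no time-out transition becomes enabled before the corresponding input state is left by a non-time-out transition. I read ``the customers wait sufficiently long before and after sending money'' as: Alice delays her payment $s(e_0,\$)$ until the point from which partial synchrony guarantees delivery of every message within some bound $\delta$, a clock-rate ratio of at most $\Phi$, and a reply by $\TM$ to any proposal within some bound $D$; and each $T_i$ exceeds the reading that $c_i$'s clock will have in the intended run at the moment $c_i$ takes the non-time-out transition out of whichever of its states carries a $\now\ge T_i$ guard (state~1 for Bob, states~2 and~5 for Alice and the Chloe$_i$). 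Since after stabilisation the intended run proceeds with an a priori bounded per-step delay, those readings are bounded, so such $T_i$ exist.

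First I would invoke Section~\ref{initialisation} to assume the set-up complete: every escrow $e_i$ has sent its promise $G$ and reached its state~2, Alice has received $G$ and sits in her black state~4, every Chloe$_i$ has received $G$ and sits in its state~2, and Bob is in his initial state~1. Writing $t$ for the time of Alice's payment, I would trace the \emph{cascade}: $e_0$ receives the money (within $\delta$), reaches state~3 and forwards $P$ to $c_1$ (within $\epsilon$, a grey state); $c_1$ receives $P$ and---not having timed out in state~2, by the choice of $T_1$---reaches state~4 and forwards the money to $e_1$; and so on. Hence within a bounded time after $t$, escrow $e_{n-1}$ forwards $P$ to Bob, who---not having timed out in state~1, by the choice of $T_n$---reaches state~3 and sends \textsc{commit} to $\TM$.

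The crux is that in this run no customer ever sends \textsc{abort}. The only transitions emitting \textsc{abort} are Bob's time-out out of state~1 (just excluded) and the time-outs out of state~5 of Alice and of each Chloe$_i$. Towards a contradiction, take the earliest such event, by a customer $p\in\{c_0,\dots,c_{n-1}\}$ timing out in state~5 at real time $t_a$; being in state~5, $p$ has already paid, so the cascade has passed $p$ and continues past it (undisturbed, since no \textsc{abort} has been sent before $t_a$), whence Bob sends \textsc{commit} within a bounded time of $p$ entering state~5; by TM-Abort-Validity $\TM$ has not issued $\chi_a$ so far, so by TM-Termination and TM-Commit-Validity it answers Bob's proposal with $\chi_c$, within $D$ and delivered within $\delta$; by the choice of $T_p$, $p$ receives $\chi_c$ and leaves state~5 before $t_a$---a contradiction. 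Hence $\TM$ never issues $\chi_a$, and since Bob proposed \textsc{commit}, property TM-Termination (Definition~\ref{transaction_manager_def}) forces $\TM$ to send $\chi_c$ to every customer.

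Finally I would trace the endgame: Bob, in state~5, receives $\chi_c$, moves to state~6, forwards $\chi_c$ to $e_{n-1}$ and reaches state~10; escrow $e_{n-1}$ is still in its state~4---it sent $P$ to Bob, and it cannot have left via $r(c_{n-1},\chi_a)$ since by Certificate consistency (Lemma~\ref{cc}) the certificate $\chi_a$ is never issued, so $c_{n-1}$ can never possess it---so on receiving $\chi_c$ it moves to state~5 and sends the money to Bob, who receives it and reaches the terminating state~11; thus Bob is paid. The step I expect to be the main obstacle is the third paragraph: ``no time-out fires'' and ``the commit certificate is issued and delivered in time'' depend on each other, and the way I would break the circle is to fix the stabilisation point and the bounds $\delta,\Phi,D$ first, read off from them the a priori bound on the intended run, and only then choose the $T_i$ above that bound, after which the minimal-counterexample argument on the real time of a would-be-first \textsc{abort} is sound.
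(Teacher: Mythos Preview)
Your proposal is correct and follows the same skeleton as the paper's proof---cascade of money and $P$-promises, Bob proposes \textsc{commit}, TM-Abort-Validity forces $\chi_c$, Bob is paid---but you work considerably harder than the paper does, and in one respect you take a genuinely different route.

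The paper reads the hypothesis ``the customers wait sufficiently long'' as simply: \emph{the $T_i$ are large enough that no time-out transition is ever taken in this run}. Under that reading there is nothing circular to break: no time-out fires, so the only call to $\TM$ is Bob's \textsc{commit}, and TM-Abort-Validity plus TM-Termination give $\chi_c$ directly. The paper's proof is four lines and, as it remarks immediately afterwards, does not invoke partial synchrony at all. Your proof, by contrast, interprets the hypothesis constructively: you want concrete $T_i$ that provably suffice, so you bring in a stabilisation time, bounds $\delta,\Phi,D$, and a minimal-counterexample argument to rule out a first \textsc{abort}. That is sound, and it buys you an explicit witness for ``sufficiently long'', but it is strictly more than the lemma requires and it imports the partial-synchrony assumption into a proof the paper deliberately keeps free of it. Your endgame paragraph (escrow $e_{n-1}$ still in state~4 because $\chi_a$ never exists, hence pays Bob) is also more detailed than the paper's one-line ``Bob will give it to $e_{n-1}$ and receive the payment in exchange'', but here the extra detail is harmless and the reasoning matches.
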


\begin{proof}
  Let us suppose that all the participants abide by $P$.
  Suppose that for all $i \in \llbracket0,n{-}1\rrbracket$, $T_i$ is large enough for Alice, Bob and every connector
  to never take any
    time-out transition. Instead, Bob will be the first customer to call the
  transaction manager in his transition from state 3 to 5.

  Using the notation of Section~\ref{run}, the active part of the protocol---after the exchange of
    the tags $G$---must start with the following sequence of actions, executed in this order:
  \begin{center}
  \begin{tabular}{@{}ll@{}crrr@{}}
        $s(e_0,\$) \color{blue} \MVAt c_0$ & $r(c_0,\$)\color{blue}\MVAt  e_0$ &&
        $s(c_1,P) \color{blue}\MVAt e_0$ & $r(e_0,P) \color{blue} \MVAt c_1$ \\
        $s(e_1,\$) \color{blue} \MVAt c_1$ & $r(c_1,\$)\color{blue}\MVAt  e_1$ &&
        $s(c_2,P) \color{blue}\MVAt e_1$ & $r(e_1,P) \color{blue} \MVAt c_2$ \\
        \dots \\
        $s(e_{n-1},\$) \color{blue} \MVAt c_{n-1}$ & $r(c_{n-1},\$)\color{blue}\MVAt  e_{n-1}$ &&
        $s(c_n,P) \color{blue}\MVAt e_{n-1}$ & $r(e_{n-1},P) \color{blue} \MVAt c_n$ &
        $s(\TM,\textsc{com.}) \color{blue} \MVAt c_n$
  \end{tabular}
  \end{center}
  By the \textbf{TM-Abort-Validity} property of Definition \ref{transaction_manager_def}, since the only proposal was \textsc{commit}, $\TM$ will issue the certificate $\chi_c$. In particular, Bob will give it to $e_{n-1}$ and receive the payment in exchange.
\end{proof}

  Interestingly, nothing in the above proofs depends in any way on the assumption of partially
  synchronous communication. The only place where this is needed is for the implementation of the
  transaction manager $\TM$---see Section~\ref{implementation-TM}. In case one is content with a
  centralised transaction manager as described in Section \ref{sec:transaction manager}, our
  protocol works correctly also when assuming communication to be asynchronous.

\subsection{Implementation of a decentralised transaction manager}
\label{implementation-TM}

As an example, in this section we provide an explicit implementation of a transaction manager.
It is a wrapper, expressed in pseudo-code, around a binary Byzantine consensus algorithm, such as
the one from~\cite{DLS88}, which is treated as a black box.
It follows that we do not need a shared blockchain to achieve our cross-chain payment protocol.

Suppose that we have $m$ \emph{validators}, which are agents running a consensus algorithm.
The validators communicate which each other by exchanging messages.
We suppose that a certain number $f$ of validators can be \emph{faulty},
in the sense that we allow arbitrary (Byzantine) behaviour. All other validators are assumed to abide by the
protocol defining a consensus algorithm.

\begin{definition}[Binary Byzantine Consensus]
\label{consensus_correctness} A binary Byzantine consensus algorithm is an algorithm in which
every validator can \emph{propose} a binary value (\textit{i.e.}\ in $\{0,1\}$) and \emph{decide} a binary value.
Assuming that every non-faulty validator proposes a binary value, the following properties must be ensured:
\begin{itemize}
    \item \textbf{BBC-Termination.} Every non-faulty process eventually decides on a binary value.
    \item \textbf{BBC-Agreement.} No two non-faulty processes decide on different binary values.
    \item \textbf{BBC-Validity.} If all non-faulty processes propose the same value, no other value can be decided.
\end{itemize}
\end{definition}

\begin{theorem}[\cite{DLS88}]
Assuming partially synchronous communication, a binary Byzantine consensus algorithm exists when $f < m/3$.
\end{theorem}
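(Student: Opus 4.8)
The statement is precisely the main positive result of Dwork, Lynch and Stockmeyer~\cite{DLS88}, so the plan is essentially to invoke their construction; below I sketch the ingredients one would reproduce. The algorithm proceeds in asynchronous \emph{rounds}, each with a designated coordinator that rotates through the $m$ validators, so that every correct validator coordinates infinitely often. In a round the coordinator collects the validators' current estimates, selects a value endorsed by a quorum of $m-f$ validators (subject to the locking rule below), broadcasts it, and a validator \emph{decides} that value once it has seen a quorum of $m-f$ acknowledgements; a validator that decides keeps participating so that its decision propagates to the others. A validator that acknowledges a value in some round becomes \emph{locked} on it, and releases the lock only in favour of a value carried with sufficient support by a later round.

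For safety (\textbf{BBC-Agreement} and \textbf{BBC-Validity}) the crux is quorum intersection: two sets of $m-f$ validators meet in at least $m-2f$ validators, and since $f<m/3$ we have $m-2f>f$, so every two quorums share a \emph{correct} validator. Hence if some correct validator decides $v$ in round $k$, then enough correct validators are locked on $v$ from round $k$ onward that no later coordinator---correct or Byzantine---can assemble a quorum endorsing a different value, so no correct validator ever decides differently. Validity is analogous: if all correct validators start with the same bit, no quorum can ever support the other bit. This intersection inequality is exactly the point at which the hypothesis $f<m/3$ is used, and it is also known to be tight.

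For termination (\textbf{BBC-Termination}) one uses partial synchrony: there is a (possibly unknown) time after which message delay and relative clock drift are bounded. Each validator runs the current round with a timeout that grows without bound as rounds advance; eventually some round begins after the stabilisation time, has a correct coordinator, and has a timeout exceeding the true round-trip delay. That coordinator then gathers a quorum, broadcasts a lock-consistent value, and every correct validator receives the quorum of acknowledgements in time and decides.

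The main obstacle in reproducing this from scratch is the locking discipline: one must specify exactly when a coordinator may propose a value different from the one a validator is currently locked on, and prove the invariant that this rule never lets a Byzantine-influenced round ``unlock'' a value on which a decision has already been reached---\emph{even though} messages originating in the pre-stabilisation, fully asynchronous period may arrive arbitrarily late and out of order. Establishing that invariant, together with the bookkeeping that guarantees timeouts eventually become large enough, is the delicate part; the remainder is the quorum counting sketched above.
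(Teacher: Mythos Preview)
Your proposal is correct: the paper does not prove this theorem at all but simply cites \cite{DLS88}, and your opening sentence already captures exactly that approach. The sketch you add of the rotating-coordinator protocol, the quorum-intersection argument for agreement and validity, and the growing-timeout argument for termination is a faithful outline of the DLS construction and goes well beyond what the paper itself supplies.
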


\noindent
Using such an algorithm as a black box, we now implement a $\TM$.
Our validators can either be customers, like Alice, Chloe and Bob, or external parties.
If the set of validators is included in or equal to the set of customers, we can talk of an \emph{internal
decentralised transaction manager}.
Our TM implementation is only valid when assuming partially synchronous communication, and $f < m/3$.
Each customer can communicate with every validator.

\paragraph{Reliable broadcast call.} To call the transaction manager, a customer reliably broadcasts
a message to all validators. Here a \emph{reliable broadcast} is a protocol described by Bracha in \cite{Bra87}.
It is guaranteed to terminate, even in a setting with asynchronous communication, provided less than
one-third of all the broadcast recipients is faulty --- the rest abiding by the protocol.
It guarantees that if the sender abides by the protocol, all recipients will receive the message sent.
Moreover, even if the sender is faulty, either all correct recipients agree on the same value sent,
or none of them accepts any value as having been sent \cite{Bra87}.

If a faulty customer sends a call with different values to different
validators, or sends something to some validators and nothing to others, the reliable broadcast
primitive will filter these messages out. In particular, if a validator receives a call from a
customer then eventually every validator will receive this call from this customer.

\paragraph{Certificate implementation.} With $\sigma_k(v)$ we denote the value $v\in\{0,1\}$
cryptographically signed by validator $k$. Such a signed message models the decision {\sc abort} (if $v=0$)
or {\sc commit} ($v=1$) taken by validator $k$. Since up to $f$ validators may be unreliable,
a valid certificate is a message that contains (for instance as attachments) more than $f$ 
copies of the same decision, taken by different validators $k$. We model such certificates as sets.
Hence:
\begin{itemize}
  \item $\chi_c$ is any set of at least $f{+}1$ messages $\sigma_k(1)$ signed by at least $f{+}1$ different validators $k$.
  \item $\chi_a$ is any set of at least $f{+}1$ messages $\sigma_k(0)$ signed by at least $f{+}1$ different validators $k$.
\end{itemize}
Such a certificate is verifiable non-interactively by a third party such as any customer or
escrow. Asking for $f{+}1$ signatures ensures that at least one correct validator has issued this
certificate, and in particular will guarantee property CC (certificate consistency). Of course a
real implementation will not rely on simple signatures of the string "0" or "1" because of the
possibility of replay attacks.

\paragraph{TM implementation.} When our protocol prescribes $s(\TM,\textsc{ab.})$, resp.\ $s(\TM,\textsc{com.})$, this is implemented as
reliably broadcasting $\textsc{abort}$, resp.\ \textsc{commit}, to all validators. The transition $r(\TM,\chi)$
denotes the receipt of certificate $\chi$ from one of the validators.
The behaviour of $\TM$ is described as Algorithm~\ref{bft-tm}.

\begin{figure}
\vspace{-2ex}
\begin{algorithm}[H]
\caption{\textit{BFT-TM} algorithm for validator $v_k$, $k\in [1..m]$.}
\label{bft-tm}

\begin{algorithmic}[1]
  \Statex\\
When validator $k$ has not proposed, nor decided, a value so far:\\
\hskip\algorithmicindent when receive \textsc{abort} from a customer $c_i$, $i \in [0.. n]$:\\
\hskip\algorithmicindent\hskip\algorithmicindent propose(0) to BBC\\
\hskip\algorithmicindent when receive \textsc{commit} from $c_n$:\\
\hskip\algorithmicindent\hskip\algorithmicindent propose(1) to BBC
\Statex\\
When validator $k$ decides value $v$ (by running BBC):\\
\hskip\algorithmicindent broadcast($\sigma_i(v)$) to all validators\\
\hskip\algorithmicindent await receipt of $\sigma_j(v)$ for all $j$ in a certain $J \subseteq [1..m]$
   such that $|J| > f$\\
\hskip\algorithmicindent $\chi := \{\sigma_{j}(v) ,\; j \in J\}$ \\
\hskip\algorithmicindent broadcast($\chi$) to all customers
\Statex\\
When validator $k$ has decided a value $v$:\\
\hskip\algorithmicindent when receive \textsc{abort} or \textsc{commit} from a customer $c_i$, $i \in [0..n]$:\\
\hskip\algorithmicindent\hskip\algorithmicindent send $\chi$ to customer $c_i$.
\end{algorithmic}
\end{algorithm}
\end{figure}

\begin{theorem}[BFT-TM correctness]
  The BFT-TM algorithm implements a $\TM$ as defined in Definition \ref{transaction_manager_def}.
\end{theorem}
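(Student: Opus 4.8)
The plan is to check the four properties of Definition~\ref{transaction_manager_def} in turn---\textbf{TM-Termination}, \textbf{TM-Consistency}, \textbf{TM-Commit-Validity}, \textbf{TM-Abort-Validity}---using as black boxes the three guarantees of the binary Byzantine consensus (\textbf{BBC-Termination}, \textbf{BBC-Agreement}, \textbf{BBC-Validity}), the guarantees of Bracha's reliable broadcast, and the arithmetic consequence of $f<m/3$ that at most $f$ validators are faulty while at least $f{+}1$ (in fact more than $2f$) are correct. I would first record the single liveness observation on which everything hinges: \emph{if any correct validator ever accepts a call}---an \textsc{abort} from some customer, or a \textsc{commit} from $c_n$---\emph{then every correct validator eventually proposes a binary value to BBC}. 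Indeed, reliable broadcast relays an accepted call to every correct validator, and when such a validator first sees one it is either still in the ``not proposed, nor decided'' state, whereupon it proposes at once, or it has already proposed (it cannot have decided without first having proposed). Hence the BBC precondition is met, so by \textbf{BBC-Termination} every correct validator eventually decides, by \textbf{BBC-Agreement} they all decide the same value $v$, and since at least $f{+}1$ correct validators then broadcast $\sigma_k(v)$, each correct validator eventually gathers more than $f$ matching signatures, assembles the corresponding certificate ($\chi_c$ if $v=1$, else $\chi_a$), and broadcasts it to all customers; in particular the $|J|>f$ wait in Algorithm~\ref{bft-tm} cannot deadlock.

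For \textbf{TM-Termination}, the first part is immediate from this observation: the hypothesis is exactly that a call is reliably broadcast, hence accepted by every correct validator, hence a certificate reaches every customer. For the second part, if \TM{} has issued a certificate then more than $f$ validators signed a common value $v$; since at most $f$ are faulty, some correct validator decided $v$, so by \textbf{BBC-Agreement} every correct validator eventually decides $v$ and holds (or can assemble) that certificate $\chi$; a customer proposing \textsc{abort} or \textsc{commit} afterwards has its call relayed to the correct validators, which either respond from the ``decided'' branch of the algorithm or have already broadcast $\chi$ to all customers in the earlier step. Either way the customer obtains a copy.

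For \textbf{TM-Consistency}, recall that a valid $\chi_c$ needs $f{+}1$ signatures on $1$ and a valid $\chi_a$ needs $f{+}1$ signatures on $0$; a correct validator decides once and signs only its decided value, which by \textbf{BBC-Agreement} is the same value $v$ for all correct validators, so signatures on the other value can come only from the $\le f$ faulty validators---too few for a certificate. Hence $\chi_c$ and $\chi_a$ cannot both exist. For \textbf{TM-Commit-Validity}, if a $\chi_c$ exists then some correct validator decided $1$; since the BBC precondition holds by the observation above, the contrapositive of \textbf{BBC-Validity} gives that the correct validators did not all propose $0$, so some correct validator proposed $1$, which in Algorithm~\ref{bft-tm} happens only after accepting a \textsc{commit} call from $c_n$; hence Bob proposed \textsc{commit}. \textbf{TM-Abort-Validity} is the mirror image: a $\chi_a$ forces some correct validator to decide $0$, hence to have proposed $0$ after accepting an \textsc{abort} call from some customer.

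I expect the main obstacle to be not any individual property but getting this liveness plumbing exactly right: arguing that a \emph{single} accepted call suffices to activate \emph{all} correct validators (so that \textbf{BBC-Termination} and \textbf{BBC-Validity} are applicable at all), that no correct validator can get stuck in the intra-algorithm wait for $|J|>f$ matching signatures, and that the slightly informal phrase ``after \TM{} has issued a certificate'' in \textbf{TM-Termination} should be read as ``after $f{+}1$ matching signatures exist'', at which point \textbf{BBC-Agreement} already pins down the global decision. The safety properties, by contrast, reduce to short contrapositive arguments once the ``$\le f$ faulty / $\ge f{+}1$ correct'' counting and the reliable-broadcast filtering of faulty customers are in place.
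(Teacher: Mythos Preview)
Your proposal is correct and follows essentially the same route as the paper: verify each of the four \TM{} properties by appealing to the matching BBC property (\textbf{BBC-Termination} for \textbf{TM-Termination}, \textbf{BBC-Agreement} for \textbf{TM-Consistency}, and \textbf{BBC-Validity} for the two validity clauses), using the $f{+}1$ signature threshold to bridge between ``a certificate exists'' and ``some correct validator decided''. Your write-up is in fact a little more careful than the paper's---you make explicit the role of reliable broadcast in propagating a single accepted call to all correct validators, and you note that the $|J|>f$ wait cannot deadlock---but the decomposition and the key steps are the same.
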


\begin{proof}
The correctness properties of a transaction manager derive almost immediately from the correctness
properties of a binary Byzantine consensus algorithm.

\begin{enumerate}
\item \textbf{TM-Termination.} Let us suppose that a customer sends an abort proposal to $\TM$, i.e.\ to each and every validator,
or that Bob proposes to commit. Then every correct validator starts participating in BBC with
the initially proposed binary value $0$ or $1$, respectively.
By \textbf{BBC-Termination}, every correct validator eventually passes line $6$ of the BFT-TM algorithm.
By assumption on the number of correct validators, every validator eventually
receives at least $f{+}1$ signatures on this value, thereby forming a certificate
that is sent to every customer.\pagebreak[2]

If a customer proposes \textsc{abort} or \textsc{commit} after $\TM$ has issued a certificate,
  each correct validator will send a copy of that certificate to that customer by lines 11--13 of the algorithm.
\item \textbf{TM-\Agreement.} By contradiction, let us suppose that two customers receive different
  certificates. Since a certificate contains at least $f{+}1$ signatures, a correct validator has
  broadcast $\sigma_i(0)$ and another correct validator has broadcast
  $\sigma_j(1)$. Line $6$ of BFT-TM shows that the value signed and broadcast has been decided by
  BBC\@. This is a contradiction with \textbf{BBC-Agreement}.
\item \textbf{TM-Commit-Validity.} TM-Commit-Validity says that if Bob does not propose commit, then $\TM$ cannot issue $\chi_c$. If Bob does not
  propose commit, then no correct validator will ever propose $1$ because of lines $1{-}5$ of BFT-TM.
  Now \textbf{BBC-Validity} implies that no correct validator can decide $1$, and consequently
    the commit certificate cannot be issued by $\TM$.
\item \textbf{TM-Abort-Validity.} TM-Abort-Validity says that if no customer proposes to abort, then
  $\TM$ cannot issue $\chi_a$. If no customer proposes to abort, then no correct validator will ever
  propose $0$ because of lines $1-5$ of BFT-TM. The final argument is the same as above.
  \qedhere
\end{enumerate}
\end{proof}

\section{Cross-chain deals versus cross-chain payments}\label{deals}

In a preprint by Herlihy, Liskov and Shrira \cite{Her19}, a \emph{cross-chain deal} is given by a
matrix $M$ where $M_{i,j}$ is listing an asset to be transferred from party $i$ to party $j$.
A cross-chain deal can also be represented as a directed graph, where each vertex represents a
party, and each arc a transfer; there is an arc from $i$ to $j$ labelled $v$ iff $M_{i,j} = v$ and $v\neq 0$.

They present two solutions to the problem of implementing such a deal, while aiming to ensure:
\begin{itemize}
\item \textbf{Safety.}  For every protocol execution, every compliant party ends up with an acceptable payoff.
\item \textbf{Weak liveness.}  No asset belonging to a compliant party is escrowed forever.
\item \textbf{Strong liveness.}  If all parties are compliant and willing to accept their proposed payoffs, then all transfers happen.
\end{itemize}
Here a payoff is deemed \emph{acceptable} to a party $i$ in the deal if party $i$ either receives
all assets $M_{j,i}$ while parting with all assets $M_{i,j}$, or if party $i$ looses nothing at all;
moreover, any outcome where she looses less and/or gains more then an acceptable outcome is also acceptable.

Each entry $M_{i,j}$ contains a type of asset and a magnitude---for instance ``5 bitcoins''.
For each type of asset a separate blockchain is assumed that acts as escrow.
The programming of these blockchains is assumed to be open source, so that all parties can convince
themselves that all escrows abide by the protocol. Unlike in our problem statements, this condition
is taken for granted and not stated explicitly in the problem description.
With this in mind, \textbf{Weak liveness} corresponds with our \textbf{Eventual termination} of
Definition~\ref{etccppwwlg}, while \textbf{Safety} is the counterpart of our \textbf{Customer security}.
Our requirement of \textbf{Escrow security} is left implicit in \cite{Her19}; since blockchains
do not possess any assets to start with, they surely cannot loose them. Finally, \textbf{Strong liveness}
is the counterpart of our \textbf{Strong liveness} property of Definition~\ref{time-bounded correct}.

Herlihy, Liskov and Shrira \cite{Her19} offer a timelock commit protocol that requires synchrony,
and assures all three of the above correctness properties. They also offer a certified blockchain
commit protocol that requires partial synchrony and a certified blockchain, and ensures 
\textbf{Safety} and \textbf{Weak liveness}; no protocol can offer \textbf{Strong liveness}
in a partially synchronous environment.
For both protocols the correctness is proven for so-called \emph{well-formed} cross-chain deals:
those whose associated directed graph is strongly connected.

The cross-chain payment cannot be seen as a special kind of cross-chain deal.
In first approximation, a cross-chain payment looks like a non-well-formed deal of the form
$$
\begin{bmatrix}
  & 0 & \$ &  & & &\\ 
  &  & 0 & \$  & & (0) &\\ 
  &  &   & 0 & \ddots & &\\ 
  &  &     &  & \ddots & \$ &\\ 
  &  & (0) &  & & 0 & \$ &\\ 
  &  &  &  & & & 0
 \end{bmatrix}
 \qquad \cong  \qquad
 c_0 \xrightarrow{\$} c_1 \xrightarrow{\$} \dots \xrightarrow{\$} c_n\;.
$$ 
However, this representation abstracts from the certificate $\chi$ that plays an essential role in
the statement of the time-bounded cross-chain payment problem. Factoring in $\chi$, an alternative
representation would be
$$
\begin{bmatrix}
  & 0 & \$ &  & & &\\ 
  &  & 0 & \$  & & (0) &\\ 
  &  &   & 0 & \ddots & &\\ 
  &  &     &  & \ddots & \$ &\\ 
  &  & (0) &  & & 0 & \$ &\\ 
  &\chi &  &  & & & 0
 \end{bmatrix}
\qquad \mbox{or} \qquad
\begin{bmatrix}
  & 0 & \$ &  & & &\\ 
  & \chi & 0 & \$  & & (0) &\\ 
  &  & \chi & 0 & \ddots & &\\ 
  &  &     &  & \ddots & \$ &\\ 
  &  & (0) &  & \chi & 0 & \$ &\\ 
  &  &  &  & & \chi & 0
 \end{bmatrix}
\;.
$$
However, these solutions presume a shared blockchain between Alice and Bob for the transfer of the certificate;
this runs counter to the problem description of cross-chain payments.

Conversely, neither is there a reduction from the cross-chain deal problem to the cross-chain payment problem.
A deal presented by a cyclic graph can be represented as a cross-chain payment where Alice and Bob
are identified. For instance, an atomic swap between two customers A and C can be expressed as a cross-chain
payment with three customers:
$$
\begin{bmatrix}
  & 0 & a \\ 
  & b & 0 &\\ 
 \end{bmatrix}
\qquad \cong \qquad
A \xrightarrow{a} C \xrightarrow{b} B = A \;.
$$
However, this idea does not generalise to well-formed cross-chain deals in general.
Since every strongly connected graph can be represented as a single cycle with repeated elements,
there is an obvious candidate reduction of such deals to cross-chain payments, simply by identifying
suitable intermediaries Chloe$_i$ and Chloe$_j$. However, this reduction does not preserve the
safety property of cross-chain deals; for when the deal goes through for Chloe$_j$ but is
aborted for Chloe$_i$, the resulting outcome is not (necessarily) acceptable for the unified
participant Chloe$_{\{i.j\}}$.

\section{Conclusion}

We formalised the problem of cross-chain payment with success guarantees. We show that there is no
solution to the existing variant of this problem without assuming synchrony, and offer such a
solution---one that works even in the presence of clock skew. We then relax the liveness guarantee
of this problem in order to propose a solution that works in a partially synchronous setting. 
This new problem differs from existing ones in that it prevents all participants from always
aborting, hence guaranteeing success when possible. Besides the new problem statements and our impossibility
result, an interesting aspect of our work is to relate recent blockchain problems, like interledger payments,
to the classic problem of transaction commit, and to offer Byzantine fault tolerant solutions to these.

\subsection*{Acknowledgements}
This research is supported under Australian Research Council Discovery Projects funding scheme (project number 180104030) entitled ``Taipan: A Blockchain with Democratic Consensus and Validated Contracts'' and Australian Research Council Future Fellowship funding scheme (project number 180100496) entitled ``The Red Belly Blockchain: A Scalable Blockchain for Internet of Things''.

\bibliographystyle{eptcs}
\bibliography{references}

\end{document}